\documentclass[11pt,reqno]{amsart}
\usepackage[T1]{fontenc}
\usepackage[utf8]{inputenc}
\usepackage{amsmath,amssymb,amsthm}
\usepackage{booktabs}
\usepackage[margin=1in]{geometry}
\usepackage[colorlinks=true,linkcolor=blue,citecolor=blue,urlcolor=blue]{hyperref}

\theoremstyle{plain}
\newtheorem{theorem}{Theorem}[section]
\newtheorem{proposition}[theorem]{Proposition}
\newtheorem{lemma}[theorem]{Lemma}
\newtheorem{corollary}[theorem]{Corollary}
\theoremstyle{definition}
\newtheorem{problem}[theorem]{Problem}
\newtheorem{remark}[theorem]{Remark}
\newtheorem{example}[theorem]{Example}

\newcommand{\T}{\mathbb{T}}
\newcommand{\Zed}{\mathbb{Z}}
\newcommand{\Rat}{\mathbb{Q}}
\newcommand{\Real}{\mathbb{R}}
\newcommand{\Disk}{\mathbb{D}}
\DeclareMathOperator{\Rp}{Re}
\DeclareMathOperator{\Ip}{Im}
\DeclareMathOperator{\cone}{cone}

\begin{document}

\title[Conservation of Normalized Energy Density]
{Conservation of Normalized Energy Density in multi-layered media:\\
a proof for the general multi-layer 2D SH problem}

\author{Hiroyuki Goto}
\address{Disaster Prevention Research Institute, Kyoto University,
Gokasho, Uji, Kyoto 611-0011, Japan}
\email{goto.hiroyuki.3z@kyoto-u.ac.jp}

\subjclass[2020]{Primary 74J05; Secondary 86A15, 42C05, 11K06}
\keywords{Elastic waves, layered media, energy, site amplification,
equidistribution, orthogonal polynomials on the unit circle}

\date{\today}

\begin{abstract}
The Normalized Energy Density (NED) was introduced in Goto et al.(2011)
as a quantity that, unlike the conventional energy, is conserved across material
interfaces in a layered elastic medium. Its conservation was proved analytically
only for a two-layer structure; for three or more layers it was supported by
Monte Carlo simulation. I supply the missing proof for a general multi-layer ($n$-layer)
structure in the 2D SH problem. The key step is to stop treating the angular
frequency $\omega$ as a single variable and instead lift the layer phases
$\theta_k=\omega\tau_k$ to independent coordinates on the $n$-torus.
The obstruction identified in the original paper---that the ratio of travel-time
combinations is generically irrational, so that the periodicity argument used for
two layers fails---then disappears, and the proof reduces to Weyl's equidistribution
theorem combined with $n$ repetitions of the same elementary integral used in the
two-layer case. I prove conservation in two independent settings that together
cover all cases of practical interest: travel times that are linearly independent
over $\Rat$, and commensurable travel times, including the
equal-travel-time (Goupillaud) discretization. I also show that an arbitrary layered structure can be
subdivided by reflectionless interfaces into one whose travel times take only
$d$ distinct, $\Rat$-independent values, where $d$ is the dimension of the
$\Rat$-span of the original travel times; this places the two proved cases at the
endpoints $d=1$ and $d=n$ and isolates the remaining open case
$1<d<n$ in a canonical form.
\end{abstract}

\maketitle


\section{Introduction}

Conserved quantities are the basic tools for analysing wave propagation in a
continuous medium, and the balance of energy in particular underlies the
quantification of seismic energy. For a layered structure, however, the
conventional energy is not conserved across a material interface. A part of
the incident energy is transmitted and part is reflected, so the energy observed
in one layer is only a portion of the incident energy. This is a familiar point,
although quantities of the form $\int\rho c\,\dot u^2\,dt$ are sometimes applied
to layered structures without noting that they are not conserved
\cite{KokushoMotoyama}.

In Goto et al.(2011)\cite{GSH2011}, a different quantity, the Normalized Energy Density
(NED), was introduced for the 2D SH problem. It is defined as the frequency
average of the squared modulus of half the transfer function, multiplied by the
impedance, and it is conserved across interfaces. Its practical appeal is
that it can be evaluated from the surface transfer function and the impedance of
the uppermost layer alone, which means no detailed knowledge of the layering is required.
This has been exploited to estimate near-surface damping directly
\cite{Goto2013}, since NED decreases in the presence of attenuation, and it bears
on the choice of site parameters for amplification models.

The conservation of NED was proved in \cite{GSH2011} only for a two-layer
structure. For three, four and ten layers it was examined by Monte Carlo
simulation and found to hold, but an analytical proof was left open. The obstacle
was identified explicitly in \cite[\S3]{GSH2011}: for three layers the relevant
quantity $\alpha_2C_1$ is a sum of functions periodic in
$\omega(\tau_1+\tau_2)$ and $\omega(\tau_1-\tau_2)$, and since
$(\tau_1+\tau_2)/(\tau_1-\tau_2)$ is generically irrational, $\alpha_2C_1$ is not
a periodic function of $\omega$. The single-period integration that settles the
two-layer case is therefore unavailable.

The purpose of this report is to show that this obstacle is an artefact of treating
$\omega$ as a single variable. If instead the layer phases
$\theta_k=\omega\tau_k$ are lifted to independent coordinates on the torus
$\T^n$, the irrationality that blocked the original argument becomes precisely the
hypothesis of Weyl's equidistribution theorem, and the frequency average becomes
an integral over $\T^n$. That integral factorizes, in that integrating over one phase at a
time peels off one layer and produces one factor of the reciprocal impedance
ratio, using the same elementary integral as in the two-layer case. The
telescoping product is the conservation law.

The paper is organised as follows. Section~\ref{sec:setup} fixes notation,
following \cite{GSH2011}. Section~\ref{sec:lemmas} establishes two algebraic
lemmas, of which the second generalizes \cite[Eq.~(30)]{GSH2011} from the
lowermost interface to every interface and yields a uniform positive lower bound
for the denominator. Section~\ref{sec:torus} performs the lifting.
Section~\ref{sec:main} proves the main theorem for $\Rat$-independent travel
times. Section~\ref{sec:goupillaud} treats the equal-travel-time case separately
via orthogonal polynomials on the unit circle; this case is not covered by
Weyl's theorem, and turns out to be a consequence of the classical
Bernstein--Szeg\H{o} normalization. Section~\ref{sec:reduction} shows how
subdivision by reflectionless interfaces extends the second proof to all
commensurable travel times and reduces the general case to a canonical form,
leaving a precisely stated open problem. Sections~\ref{sec:oblique}--\ref{sec:damping}
record the extension to oblique incidence and the situation with attenuation,
Section~\ref{sec:numerics} reports numerical verification, and
Section~\ref{sec:priorwork} discusses the relation to earlier work.

\section{Setting and notation}\label{sec:setup}

I follow \cite[\S3]{GSH2011}. Consider $n$ horizontal layers
$\#1,\dots,\#n$ over a half-space basement $\#0$. Layer $\#k$ has S-wave velocity
$\beta_k$, density $\rho_k$ and thickness $h_k$; the basement has $\beta_0$ and
$\rho_0$. A 2D SH plane wave is incident vertically from the basement. Every
layer is elastic, independently of the wave amplitude, and there is no
attenuation.

Write
\begin{equation}
Z_k=\rho_k\beta_k,\qquad \tau_k=\frac{h_k}{\beta_k}\quad(k=1,\dots,n),
\qquad Z_0=\rho_0\beta_0 .
\end{equation}
To unify notation set $Z_{n+1}:=Z_0$ and let
\begin{equation}
R_k:=\frac{Z_k}{Z_{k+1}}\qquad(k=1,\dots,n),
\end{equation}
so that $R_k=R_{k,k+1}$ for $k\le n-1$ and $R_n=R_{n,0}$ in the notation of
\cite{GSH2011}. Note the telescoping identity
\begin{equation}
R_{k,0}:=\frac{Z_k}{Z_0}=\prod_{j=k}^{n}R_j .
\label{eq:telescope}
\end{equation}
Put
\begin{equation}
p_k=\tfrac12(1+R_k),\qquad q_k=\tfrac12(1-R_k),\qquad \alpha_k=e^{i\omega\tau_k}.
\end{equation}
Since $R_k>0$,
\begin{equation}
p_k>0,\qquad p_k+q_k=1,\qquad p_k-q_k=R_k,\qquad p_k^2-q_k^2=R_k>0,
\label{eq:pq}
\end{equation}
and in particular $p_k>|q_k|$ always.

Let $A_k,B_k$ be the up-going and down-going amplitudes in layer $\#k$, and
$A_0,B_0$ those in the basement. The interface conditions give the propagator
matrix of \cite[Eq.~(21)]{GSH2011}, and the traction-free condition $A_1=B_1$ at
the free surface yields
\begin{equation}
A_k=C_{k-1}A_1,\quad B_k=D_{k-1}A_1\ \ (1\le k\le n),\qquad
A_0=C_nA_1,\quad B_0=D_nA_1,
\end{equation}
where $C_k,D_k$ are defined by
\begin{equation}
\begin{cases}
C_k=p_k\alpha_kC_{k-1}+q_k\alpha_k^{*}D_{k-1},\\[2pt]
D_k=q_k\alpha_kC_{k-1}+p_k\alpha_k^{*}D_{k-1},
\end{cases}
\qquad C_0=D_0=1
\label{eq:recursion}
\end{equation}
(\cite[Eqs.~(27)--(28)]{GSH2011}); the asterisk denotes complex conjugation. The
quantity of interest is
\begin{equation}
P_k(\omega)=\left|\frac{A_k}{A_0}\right|^2=\frac{|C_{k-1}|^2}{|C_n|^2},
\qquad
\langle P_k\rangle=\lim_{\Omega\to\infty}\frac1\Omega
\int_{-\Omega/2}^{\Omega/2}P_k(\omega)\,d\omega ,
\end{equation}
and the NED of layer $\#k$ is $Z_k\langle P_k\rangle$. The assertion to be proved
is
\begin{equation}
Z_k\langle P_k\rangle=Z_0\quad(k=1,\dots,n),
\qquad\text{equivalently}\qquad
\langle P_k\rangle=\frac{1}{R_{k,0}} .
\label{eq:goal}
\end{equation}
That the value is independent of $k$ is the statement that NED is conserved
through the layers.

\section{Two algebraic lemmas}\label{sec:lemmas}

Both lemmas below are purely algebraic. They use no property of the travel times,
and remain valid verbatim when $\alpha_k$ is replaced by an independent torus
variable in Section~\ref{sec:torus}.

\begin{lemma}[{\cite[Appendix B]{GSH2011}}]\label{lem:conj}
$D_k=C_k^{*}$ for all $k\ge0$.
\end{lemma}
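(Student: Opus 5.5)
The plan is induction on $k$, using only the recursion \eqref{eq:recursion} and the fact that the coefficients $p_k,q_k$ are real while $|\alpha_k|=1$. The last point matters: for real $\omega$, $\alpha_k^{*}=\alpha_k^{-1}$, and conjugation exchanges $\alpha_k$ with $\alpha_k^{*}$. For the base case $k=0$ we have $D_0=C_0^{*}$ trivially, since both equal $1$, which is real.

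For the inductive step, assume $D_{k-1}=C_{k-1}^{*}$ and conjugate the first line of \eqref{eq:recursion}. Because $p_k,q_k\in\Real$, this gives
\begin{equation*}
C_k^{*}=p_k\alpha_k^{*}C_{k-1}^{*}+q_k\alpha_k D_{k-1}^{*}.
\end{equation*}
Substituting the inductive hypothesis $C_{k-1}^{*}=D_{k-1}$, $D_{k-1}^{*}=C_{k-1}$ yields
\begin{equation*}
C_k^{*}=q_k\alpha_kC_{k-1}+p_k\alpha_k^{*}D_{k-1}.
\end{equation*}
This is exactly the second line of \eqref{eq:recursion}, so $C_k^{*}=D_k$.

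No step is a real obstacle. The one point to watch is the preamble's claim that the lemma "remains valid verbatim" once $\alpha_k$ becomes a torus variable $e^{i\theta_k}$. The argument uses $\alpha_k^{*}=\overline{\alpha_k}$ only as the complex conjugate of a unimodular number, which still holds for $\theta_k\in\T$. So I would phrase the proof with $\alpha_k$ an arbitrary point of the unit circle, not with the specific value $e^{i\omega\tau_k}$. The induction then never uses any relation among the $\tau_k$, and the lemma holds for every $k$ and every choice of phases.
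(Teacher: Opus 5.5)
Your proof is correct and is essentially the paper's own induction. The paper only repackages it by setting $\zeta_k=\alpha_kC_{k-1}$, so that under the hypothesis the recursion reads $C_k=p_k\zeta_k+q_k\zeta_k^{*}$, $D_k=q_k\zeta_k+p_k\zeta_k^{*}$. A minor correction: $|\alpha_k|=1$ is never actually used, since $(\alpha_k^{*})^{*}=\alpha_k$ holds for any complex number; the only property that matters is that $p_k,q_k$ are real.
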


\begin{proof}
For $k=0$, $C_0=D_0=1$. Assume $D_{k-1}=C_{k-1}^{*}$ and set
$\zeta_k:=\alpha_kC_{k-1}$. Since $\alpha_k^{*}C_{k-1}^{*}=\zeta_k^{*}$,
\eqref{eq:recursion} becomes
\begin{equation}
C_k=p_k\zeta_k+q_k\zeta_k^{*},\qquad D_k=q_k\zeta_k+p_k\zeta_k^{*} .
\label{eq:CD}
\end{equation}
As $p_k,q_k$ are real, $C_k^{*}=p_k\zeta_k^{*}+q_k\zeta_k=D_k$.
\end{proof}

\begin{lemma}[quadratic-form representation]\label{lem:quad}
With $\zeta_k=\alpha_kC_{k-1}$, for every $k\ge1$,
\begin{equation}
C_k=\Rp\zeta_k+iR_k\,\Ip\zeta_k,
\qquad
|C_k|^2=\bigl(\Rp\zeta_k\bigr)^2+R_k^2\bigl(\Ip\zeta_k\bigr)^2 .
\label{eq:quad}
\end{equation}
\end{lemma}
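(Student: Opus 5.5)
The argument is a direct computation from the representation \eqref{eq:CD} established in the proof of Lemma~\ref{lem:conj}. Fix $k\ge1$ and set $\zeta_k=\alpha_kC_{k-1}$. By Lemma~\ref{lem:conj} we have $D_{k-1}=C_{k-1}^{*}$, and the recursion \eqref{eq:recursion} can then be rewritten as $C_k=p_k\zeta_k+q_k\zeta_k^{*}$. I will decompose $\zeta_k=x+iy$ with $x=\Rp\zeta_k$ and $y=\Ip\zeta_k$. Substituting this gives
\[
C_k=p_k(x+iy)+q_k(x-iy)=(p_k+q_k)\,x+i\,(p_k-q_k)\,y .
\]

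Next I invoke the identities \eqref{eq:pq}, namely $p_k+q_k=1$ and $p_k-q_k=R_k$. Both hold because $p_k=\tfrac12(1+R_k)$ and $q_k=\tfrac12(1-R_k)$. This immediately yields the first formula $C_k=\Rp\zeta_k+iR_k\Ip\zeta_k$.

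For the modulus, the key observation is that $p_k,q_k$ and $R_k$ are real. So is $R_k\Ip\zeta_k$, and therefore $\Rp\zeta_k$ and $R_k\Ip\zeta_k$ are exactly the real and imaginary parts of $C_k$. Hence $|C_k|^2=(\Rp C_k)^2+(\Ip C_k)^2=(\Rp\zeta_k)^2+R_k^2(\Ip\zeta_k)^2$.

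There is no real obstacle here. The only point needing care is that the reduction to $p_k\zeta_k+q_k\zeta_k^{*}$ uses Lemma~\ref{lem:conj} at index $k-1$, which gives $\alpha_k^{*}D_{k-1}=\zeta_k^{*}$. I will also remark that the computation uses only $|\alpha_k|=1$ (through $\alpha_k^{*}$ being the conjugate of $\alpha_k$) and the reality of $R_k$. It therefore remains valid when $\alpha_k$ is replaced by an arbitrary point of the unit circle, as promised for Section~\ref{sec:torus}.
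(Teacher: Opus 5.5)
Your proof is correct and follows the paper's own argument: rewrite the recursion as $C_k=p_k\zeta_k+q_k\zeta_k^{*}$ using Lemma~\ref{lem:conj}, split $\zeta_k$ into $\Rp\zeta_k$ and $\Ip\zeta_k$, and apply $p_k+q_k=1$ and $p_k-q_k=R_k$. The modulus formula then follows because $R_k$ is real, a step the paper leaves implicit and you spell out. One small correction to your closing remark: the step $\alpha_k^{*}C_{k-1}^{*}=\zeta_k^{*}$ holds for any complex $\alpha_k$, so the lemma as stated does not actually use $|\alpha_k|=1$.
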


\begin{proof}
By \eqref{eq:CD} and \eqref{eq:pq},
$C_k=p_k\zeta_k+q_k\zeta_k^{*}
=(p_k+q_k)\Rp\zeta_k+i(p_k-q_k)\Ip\zeta_k
=\Rp\zeta_k+iR_k\Ip\zeta_k$.
\end{proof}

Lemma~\ref{lem:quad} generalizes \cite[Eq.~(30)]{GSH2011}, which was needed there
only at the lowermost interface, to every interface. All interfaces are required
for the peeling argument of Section~\ref{sec:main}.

\begin{corollary}[non-vanishing]\label{cor:pos}
Let $m_k=\min(1,R_k^2)$ and $M_k=\max(1,R_k^2)$. Then
$m_k|C_{k-1}|^2\le|C_k|^2\le M_k|C_{k-1}|^2$, and consequently
\begin{equation}
0<\prod_{j=1}^{n}m_j\;\le\;|C_n|^2\;\le\;\prod_{j=1}^{n}M_j<\infty
\label{eq:bounds}
\end{equation}
uniformly in $\omega$ (or in the torus variables).
\end{corollary}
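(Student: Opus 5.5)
The plan is to read the two-sided bound directly off the quadratic-form representation of Lemma~\ref{lem:quad} and then iterate it down the layers. Fix $k\ge1$ and write $x=\Rp\zeta_k$, $y=\Ip\zeta_k$ with $\zeta_k=\alpha_kC_{k-1}$. Since $|\alpha_k|=1$ (whether $\alpha_k=e^{i\omega\tau_k}$ or an independent point of the unit circle, as in Section~\ref{sec:torus}), we have $x^2+y^2=|\zeta_k|^2=|C_{k-1}|^2$. By \eqref{eq:quad}, $|C_k|^2=x^2+R_k^2y^2$. The weights $1$ and $R_k^2$ both lie in $[m_k,M_k]$, so
\[
m_k(x^2+y^2)\le x^2+R_k^2y^2\le M_k(x^2+y^2).
\]
This is exactly $m_k|C_{k-1}|^2\le|C_k|^2\le M_k|C_{k-1}|^2$. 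The step uses only that $R_k>0$ is real and that $\alpha_k$ has unit modulus, so it holds pointwise in $\omega$ and equally pointwise on the torus.

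Next I would induct on $k$, starting from $|C_0|^2=1$. Multiplying the one-step inequalities for $j=1,\dots,n$ gives
\[
\prod_{j=1}^n m_j\le |C_n|^2\le \prod_{j=1}^n M_j .
\]
All factors are nonnegative, so chaining the inequalities is legitimate. The constants depend only on the impedance ratios and not on $\omega$ or the torus variables, which gives the claimed uniformity.

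Finally, strict positivity and finiteness follow because each $R_k=Z_k/Z_{k+1}$ is a finite positive number, so $m_k>0$ and $M_k<\infty$, and a finite product of such numbers stays in $(0,\infty)$.

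I expect no real obstacle here. The only point needing care is to record explicitly that $|\zeta_k|=|C_{k-1}|$ relies on $|\alpha_k|=1$. That is what makes the bound survive the lifting of Section~\ref{sec:torus}, and it is why the corollary is stated "in the torus variables" as well.
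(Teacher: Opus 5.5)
Your proposal is correct and is the paper's own argument: the paper's proof is the one-line remark that the bounds are immediate from the quadratic-form representation of Lemma~\ref{lem:quad}, together with $|\zeta_k|=|C_{k-1}|$ and $C_0=1$. You spell out what that remark leaves implicit, namely the weight comparison $m_k(x^2+y^2)\le x^2+R_k^2y^2\le M_k(x^2+y^2)$, the chaining down from $C_0$, and the fact that $|\zeta_k|=|C_{k-1}|$ uses $|\alpha_k|=1$, which is what lets the bound hold on the torus.
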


\begin{proof}
Immediate from \eqref{eq:quad} and $|\zeta_k|=|C_{k-1}|$, using $C_0=1$.
\end{proof}

Corollary~\ref{cor:pos} states that no resonance drives the denominator to zero.
It is what makes $P_k$ a bounded continuous function on the torus, and hence what
licenses both Fubini's theorem and the use of unique ergodicity below. In
\cite{GSH2011} the corresponding fact was verified case by case; here it follows
at once, with a uniform bound.

\section{Lifting to the torus}\label{sec:torus}

Let $\T=\Real/2\pi\Zed$. For $\theta=(\theta_1,\dots,\theta_n)\in\T^n$ define
$C_k(\theta)$ and $D_k(\theta)$ by the recursion \eqref{eq:recursion} with
$\alpha_k$ replaced by $e^{i\theta_k}$. By construction,
\begin{equation}
C_k \text{ depends only on } \theta_1,\dots,\theta_k,
\label{eq:dependence}
\end{equation}
and not on $\theta_{k+1},\dots,\theta_n$. This triangular dependence is the
structural fact that makes the layer-peeling argument work; it is invisible as
long as everything is written as a function of the single variable $\omega$.

\begin{lemma}\label{lem:cont}
For $1\le k\le n$ the function
$F_k(\theta):=|C_{k-1}(\theta)|^2/|C_n(\theta)|^2$
is real, continuous and uniformly bounded on $\T^n$.
\end{lemma}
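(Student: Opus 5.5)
The plan is to establish continuity, reality and boundedness separately, using the recursion \eqref{eq:recursion} with $\alpha_k$ replaced by $e^{i\theta_k}$ together with Corollary~\ref{cor:pos}.

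\emph{Continuity.} By induction on $k$, each $C_k(\theta)$ and $D_k(\theta)$ is a polynomial in $e^{\pm i\theta_1},\dots,e^{\pm i\theta_k}$ with real coefficients $p_j,q_j$. The base case is $C_0=D_0=1$. The inductive step holds because \eqref{eq:recursion} multiplies the previous pair by $e^{\pm i\theta_k}$ and forms real linear combinations. Such trigonometric polynomials are continuous on $\T^n$, since $e^{i\theta_j}$ is well defined modulo $2\pi$. Hence $|C_{k-1}|^2$ and $|C_n|^2$ are continuous, real-valued and nonnegative.

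\emph{Non-vanishing denominator and boundedness.} Corollary~\ref{cor:pos} was stated as purely algebraic. Its proof uses only Lemma~\ref{lem:conj}, Lemma~\ref{lem:quad}, the identity $|\zeta_k|=|e^{i\theta_k}C_{k-1}|=|C_{k-1}|$ and $C_0=1$. None of these involves the relation $\theta_k=\omega\tau_k$, so all of them carry over verbatim to the torus variables. In particular, the induction in Lemma~\ref{lem:conj} gives $D_k(\theta)=C_k(\theta)^*$ with $e^{i\theta_k}$ in place of $\alpha_k$, which in turn validates \eqref{eq:CD} and \eqref{eq:quad} pointwise on $\T^n$. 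We therefore obtain
\[
|C_n(\theta)|^2\ge \prod_{j=1}^n m_j>0 \quad\text{for all }\theta\in\T^n .
\]
Consequently $F_k$, being a quotient of continuous functions with a nonvanishing denominator, is continuous and real.

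For the uniform bound, apply the two-sided estimate of Corollary~\ref{cor:pos} iteratively to the numerator:
\[
|C_{k-1}|^2\le\prod_{j=1}^{k-1}M_j .
\]
This gives
\[
0\le F_k(\theta)\le \prod_{j=1}^{k-1}M_j\Big/\prod_{j=1}^{n}m_j
\]
uniformly in $\theta$. Alternatively, chaining the lower bounds from layer $k$ to layer $n$ gives the sharper estimate $F_k\le\prod_{j=k}^n m_j^{-1}$. Compactness of $\T^n$ would also give boundedness, but the explicit bound is more informative.

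The only step needing care is checking that Lemmas~\ref{lem:conj}--\ref{lem:quad} indeed hold pointwise for independent phases. Concretely, one must confirm that the conjugation $(e^{i\theta_k})^*=e^{-i\theta_k}$ is the only property of $\alpha_k$ those proofs use. That is immediate from inspecting them, so I do not expect a genuine obstacle.
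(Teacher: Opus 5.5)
Your proposal is correct and follows the paper's own argument: each $C_j$ is a continuous (trigonometric-polynomial) function of $e^{i\theta_1},\dots,e^{i\theta_j}$, and Corollary~\ref{cor:pos}, which carries over verbatim to independent torus variables, bounds $|C_n|^2$ below by $\prod_j m_j>0$. Your explicit upper bounds are correct additions not present in the paper, and the sharper one, $F_k\le\prod_{j=k}^{n}m_j^{-1}$ (from chaining the lower bounds from layer $k$ to layer $n$), is valid.
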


\begin{proof}
$C_j$ is a continuous function of $e^{i\theta_1},\dots,e^{i\theta_j}$, and by
Corollary~\ref{cor:pos} the denominator has a positive lower bound.
\end{proof}

The physical quantity is the restriction of $F_k$ to the line
$\ell_\tau=\{(\omega\tau_1,\dots,\omega\tau_n)\bmod 2\pi:\omega\in\Real\}$,
namely $P_k(\omega)=F_k(\omega\tau_1,\dots,\omega\tau_n)$.

\section{Main theorem: $\Rat$-independent travel times}\label{sec:main}

\subsection{The elementary integral}

\begin{lemma}\label{lem:elem}
For $R>0$,
$\displaystyle\frac{1}{2\pi}\int_0^{2\pi}
\frac{d\varphi}{\cos^2\varphi+R^2\sin^2\varphi}=\frac1R$.
\end{lemma}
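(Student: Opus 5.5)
The plan is to evaluate the integral directly by the substitution $t=\tan\varphi$, after reducing it by symmetry to a single quarter-period.

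First, the integrand $f(\varphi)=(\cos^2\varphi+R^2\sin^2\varphi)^{-1}$ is well defined. Its denominator is at least $\min(1,R^2)>0$, the same bound that appears in Corollary~\ref{cor:pos}, so $f$ is continuous and bounded on $[0,2\pi]$. Moreover $f$ depends only on $\cos^2\varphi$ and $\sin^2\varphi$, hence is $\pi/2$-symmetric in the sense that $f(\pi-\varphi)=f(\varphi)$ and $f(\varphi+\pi)=f(\varphi)$. Therefore
\[
\int_0^{2\pi}f(\varphi)\,d\varphi=4\int_0^{\pi/2}f(\varphi)\,d\varphi .
\]

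On $(0,\pi/2)$, divide numerator and denominator by $\cos^2\varphi$ and set $t=\tan\varphi$, so that $dt=\sec^2\varphi\,d\varphi$. This gives
\[
\int_0^{\pi/2}\frac{\sec^2\varphi\,d\varphi}{1+R^2\tan^2\varphi}
=\int_0^{\infty}\frac{dt}{1+R^2t^2}
=\frac{1}{R}\Bigl[\arctan(Rt)\Bigr]_0^{\infty}=\frac{\pi}{2R}.
\]
The integral is improper only at the endpoint $t=\infty$, and it converges absolutely because the original integrand is bounded on a finite interval. Multiplying by $4/(2\pi)$ gives $1/R$.

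As an alternative cross-check, one can use $\cos^2\varphi=\frac{1+\cos2\varphi}{2}$ and $\sin^2\varphi=\frac{1-\cos2\varphi}{2}$ to rewrite the denominator as $a+b\cos\psi$ with $\psi=2\varphi$, $a=(1+R^2)/2$ and $b=(1-R^2)/2$. The standard formula $\frac{1}{2\pi}\int_0^{2\pi}\frac{d\psi}{a+b\cos\psi}=\frac{1}{\sqrt{a^2-b^2}}$, valid for $a>|b|$, then applies. Here $a^2-b^2=R^2$, so the result is again $1/R$. This version also shows why the hypothesis $R>0$ is exactly what is needed: it gives $a>|b|$. It is also the form most naturally reused in the peeling argument, where $\varphi$ plays the role of $\arg\zeta_k$ in Lemma~\ref{lem:quad}.

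There is no real obstacle here. The only point needing care is the improper endpoint of the tangent substitution, and the symmetry reduction handles it.
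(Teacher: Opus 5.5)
Your proof is correct and is the paper's own argument: you use period $\pi$ and the reflection symmetry about $\varphi=\pi/2$ to reduce to $4\int_0^{\pi/2}$, then the substitution $t=\tan\varphi$ gives $\pi/(2R)$. One small correction to your cross-check: $a>|b|$ holds for every real $R\neq0$ (indeed $a-|b|=\min(1,R^2)$), so the sign condition $R>0$ is what gives $\sqrt{a^2-b^2}=R$ rather than $|R|$, not what gives $a>|b|$.
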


\begin{proof}
The integrand has period $\pi$ and is symmetric about $\varphi=\pi/2$, so the
value is four times $\int_0^{\pi/2}$ divided by $2\pi$; and
$\int_0^{\pi/2}d\varphi/(\cos^2\varphi+R^2\sin^2\varphi)
=[R^{-1}\arctan(R\tan\varphi)]_0^{\pi/2}=\pi/(2R)$,
as in \cite[Eq.~(14)]{GSH2011}.
\end{proof}

This is exactly the integral used for two layers in \cite{GSH2011}. The whole
content of the general case is that it may be applied $n$ times in succession.

\subsection{Layer peeling}

\begin{lemma}[layer peeling]\label{lem:peel}
Let $1\le j\le n$ and let $G$ be any function not depending on $\theta_j$. Then
\begin{equation}
\frac{1}{2\pi}\int_0^{2\pi}\frac{G}{|C_j(\theta)|^2}\,d\theta_j
=\frac{1}{R_j}\cdot\frac{G}{|C_{j-1}(\theta)|^2}.
\end{equation}
\end{lemma}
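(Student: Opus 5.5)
The plan is to use Lemma~\ref{lem:quad} to write $|C_j|^2$ as a quadratic form in the real and imaginary parts of $\zeta_j=e^{i\theta_j}C_{j-1}$, and then to change variables in $\theta_j$ so that the integral becomes exactly the elementary integral of Lemma~\ref{lem:elem}.

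First I fix all coordinates other than $\theta_j$. By the triangular dependence \eqref{eq:dependence}, $C_{j-1}$ involves only $\theta_1,\dots,\theta_{j-1}$, so it is a fixed complex number during the integration. Corollary~\ref{cor:pos} shows it is nonzero. Write it in polar form, $C_{j-1}=r e^{i\psi}$ with $r=|C_{j-1}|>0$. Then $\zeta_j=r e^{i(\theta_j+\psi)}$, so Lemma~\ref{lem:quad} gives
\[
|C_j|^2=r^2\bigl(\cos^2(\theta_j+\psi)+R_j^2\sin^2(\theta_j+\psi)\bigr).
\]
Since $G$ does not depend on $\theta_j$, I pull both $G$ and $r^{-2}=|C_{j-1}|^{-2}$ out of the integral. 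What remains is
\[
\frac{1}{2\pi}\int_0^{2\pi}\frac{d\theta_j}{\cos^2(\theta_j+\psi)+R_j^2\sin^2(\theta_j+\psi)}.
\]

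Next I substitute $\varphi=\theta_j+\psi$. The integrand is $2\pi$-periodic, so the shift does not change the integral over a full period. Lemma~\ref{lem:elem} with $R=R_j>0$ then gives the value $1/R_j$, which is the claimed identity.

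The only point needing care is that $\psi$ depends on the other variables $\theta_1,\dots,\theta_{j-1}$. That is harmless, because those variables are held fixed during the $\theta_j$-integration, and the translation invariance holds for each fixed value of $\psi$. Measurability and integrability cause no trouble either, since everything is continuous and bounded by Lemma~\ref{lem:cont} and Corollary~\ref{cor:pos}. So I expect no real obstacle: the main step is recognizing that Lemma~\ref{lem:quad}, combined with $|\zeta_j|=|C_{j-1}|$, separates the dependence on $\theta_j$ from the dependence on the earlier phases.
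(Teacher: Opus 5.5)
Your proposal is correct and follows the paper's proof essentially line for line. Both arguments write $C_{j-1}=re^{i\psi}$ with $r>0$ (you justify $r>0$ explicitly via Corollary~\ref{cor:pos}), use Lemma~\ref{lem:quad} to get $|C_j|^2=r^2(\cos^2\varphi+R_j^2\sin^2\varphi)$ with $\varphi=\theta_j+\psi$, shift the $\theta_j$-integral by translation invariance, and finish with Lemma~\ref{lem:elem}.
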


\begin{proof}
By \eqref{eq:dependence}, $C_{j-1}$ does not depend on $\theta_j$. Write
$C_{j-1}=re^{i\psi}$ with $r=|C_{j-1}|>0$; then
$\zeta_j=e^{i\theta_j}C_{j-1}=re^{i\varphi}$ with $\varphi=\theta_j+\psi$, and
Lemma~\ref{lem:quad} gives
$|C_j|^2=r^2(\cos^2\varphi+R_j^2\sin^2\varphi)$. The Haar measure
$d\theta_j/2\pi$ is translation invariant, so the integral may be taken in
$\varphi$, and Lemma~\ref{lem:elem} applies.
\end{proof}

\subsection{The torus identity and the main theorem}

\begin{proposition}[torus identity]\label{prop:torus}
Let $1\le k\le n$. For any fixed $\theta_1,\dots,\theta_{k-1}$,
\begin{equation}
\frac{1}{(2\pi)^{\,n-k+1}}\int_{\T^{\,n-k+1}}
F_k(\theta)\,d\theta_k\cdots d\theta_n
=\prod_{j=k}^{n}\frac{1}{R_j}=\frac{1}{R_{k,0}} .
\label{eq:torusid}
\end{equation}
In particular the left-hand side does not depend on
$\theta_1,\dots,\theta_{k-1}$.
\end{proposition}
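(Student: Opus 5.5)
The plan is to integrate out the phases one at a time, starting from the innermost variable $\theta_n$ and working upward to $\theta_k$, applying Lemma~\ref{lem:peel} at each stage. Write $F_k=|C_{k-1}|^2\cdot|C_n|^{-2}$. By \eqref{eq:dependence}, the numerator $G_0:=|C_{k-1}|^2$ depends only on $\theta_1,\dots,\theta_{k-1}$. In particular it is independent of $\theta_k,\dots,\theta_n$. By Lemma~\ref{lem:cont} the integrand is bounded and continuous, so Fubini's theorem lets us evaluate the normalized integral over $\T^{n-k+1}$ as an iterated integral in the order $\theta_n,\theta_{n-1},\dots,\theta_k$.

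The induction runs downward in $j$, from $j=n$ to $j=k$. The claim after integrating out $\theta_n,\dots,\theta_j$ is
\[
\frac{1}{(2\pi)^{n-j+1}}\int F_k\,d\theta_j\cdots d\theta_n=\Bigl(\prod_{i=j}^{n}\frac1{R_i}\Bigr)\frac{G_0}{|C_{j-1}|^2}.
\]
For the step at index $j$, the current integrand is $G\,|C_j|^{-2}$ with $G=G_0\prod_{i>j}R_i^{-1}$. This $G$ does not depend on $\theta_j$, since $j\ge k$ and $G_0$ involves only $\theta_1,\dots,\theta_{k-1}$. Lemma~\ref{lem:peel} then replaces $|C_j|^{-2}$ by $R_j^{-1}|C_{j-1}|^{-2}$. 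The new integrand, $G\,R_j^{-1}|C_{j-1}|^{-2}$, depends only on $\theta_1,\dots,\theta_{j-1}$ by \eqref{eq:dependence}. It is therefore of the right form for the next step in $\theta_{j-1}$, provided $j-1\ge k$.

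After the final step $j=k$ we obtain $\bigl(\prod_{i=k}^n R_i^{-1}\bigr)G_0/|C_{k-1}|^2=\prod_{i=k}^nR_i^{-1}$. The value $|C_{k-1}|^2$ is positive by Corollary~\ref{cor:pos}, and when $k=1$ it equals $|C_0|^2=1$. The telescoping identity \eqref{eq:telescope} turns the product into $1/R_{k,0}$. Since the final expression contains no $\theta$'s, independence from $\theta_1,\dots,\theta_{k-1}$ follows.

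The only delicate point is the bookkeeping that makes Lemma~\ref{lem:peel} applicable at every stage. Its hypothesis requires that the accumulated prefactor $G$ not depend on the variable being integrated. This holds only because we peel from the bottom layer up, so that each remaining denominator $|C_{j-1}|^2$ involves strictly earlier variables. Measurability and the justification for iterated integration come for free from the uniform bounds in Corollary~\ref{cor:pos}.
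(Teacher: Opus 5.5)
Your proposal is correct and follows the paper's proof essentially step for step. Both arguments justify Fubini via Lemma~\ref{lem:cont}, then apply Lemma~\ref{lem:peel} iteratively from $\theta_n$ up to $\theta_k$ using the triangular dependence \eqref{eq:dependence}, cancel $|C_{k-1}|^2$ at the last step, and finish with \eqref{eq:telescope}. Your explicit inductive tracking of the prefactor $G$ is a slightly more formal version of the paper's remark that the result ``has the same form with $n$ replaced by $n-1$.''
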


\begin{proof}
By Lemma~\ref{lem:cont} the integrand is continuous and bounded, so by Fubini's
theorem, I may integrate iteratively in the order
$\theta_n,\theta_{n-1},\dots,\theta_k$.

Start with $j=n$. The numerator $|C_{k-1}|^2$ does not depend on $\theta_n$
because $k-1<n$, so Lemma~\ref{lem:peel} with $G=|C_{k-1}|^2$ gives
\begin{equation}
\frac{1}{2\pi}\int_0^{2\pi}\frac{|C_{k-1}|^2}{|C_n|^2}\,d\theta_n
=\frac{1}{R_n}\cdot\frac{|C_{k-1}|^2}{|C_{n-1}|^2}.
\end{equation}
The result has the same form with $n$ replaced by $n-1$, and its numerator again
does not depend on $\theta_{n-1}$ when $k-1<n-1$. Iterating down to $j=k$,
\begin{equation}
\frac{1}{(2\pi)^{\,n-k+1}}\int_{\T^{\,n-k+1}}\frac{|C_{k-1}|^2}{|C_n|^2}
=\left(\prod_{j=k}^{n}\frac{1}{R_j}\right)\frac{|C_{k-1}|^2}{|C_{k-1}|^2}
=\prod_{j=k}^{n}\frac{1}{R_j},
\end{equation}
the numerator cancelling exactly at the last step. Now use
\eqref{eq:telescope}.
\end{proof}

\begin{theorem}[conservation of NED]\label{thm:main}
Assume $\tau_1,\dots,\tau_n$ are linearly independent over $\Rat$. Then for every
$k=1,\dots,n$,
\begin{equation}
\langle P_k\rangle=\frac{1}{R_{k,0}}=\frac{\rho_0\beta_0}{\rho_k\beta_k},
\qquad\text{so that}\qquad
\rho_k\beta_k\langle P_k\rangle=\rho_0\beta_0 .
\end{equation}
\end{theorem}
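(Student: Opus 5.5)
The plan is to combine the torus identity (Proposition~\ref{prop:torus}) with Weyl's equidistribution theorem for continuous flows on $\T^n$. By Lemma~\ref{lem:cont}, $F_k$ is continuous on $\T^n$. The physical quantity is its restriction to the line $\ell_\tau$, namely $P_k(\omega)=F_k(\omega\tau)$. We therefore need the time average of a continuous function along the linear flow $\omega\mapsto\omega\tau \bmod 2\pi$.

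\emph{First step: equidistribution.} Assume $\tau_1,\dots,\tau_n$ are linearly independent over $\Rat$. Then for every nonzero $m\in\Zed^n$ we have $m\cdot\tau\neq0$. Hence for the character $e_m(\theta)=e^{im\cdot\theta}$,
\begin{equation*}
\frac1\Omega\int_{-\Omega/2}^{\Omega/2}e^{i\omega\, m\cdot\tau}\,d\omega=\frac{2\sin(\Omega\, m\cdot\tau/2)}{\Omega\, m\cdot\tau}\longrightarrow0,
\end{equation*}
while for $m=0$ the average is $1$. Thus the average along $\ell_\tau$ agrees with the Haar integral for every trigonometric polynomial. By Stone--Weierstrass, trigonometric polynomials are uniformly dense in $C(\T^n)$. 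An $\varepsilon/3$ argument then extends the convergence to every continuous $F$:
\begin{equation*}
\lim_{\Omega\to\infty}\frac1\Omega\int_{-\Omega/2}^{\Omega/2}F(\omega\tau)\,d\omega=\frac{1}{(2\pi)^n}\int_{\T^n}F\,d\theta .
\end{equation*}
The symmetric window $[-\Omega/2,\Omega/2]$ causes no difficulty, since the computation above is already done on that window. This is the continuous-time form of Weyl's theorem and can simply be cited.

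\emph{Second step: evaluate the Haar integral.} Apply this with $F=F_k$ to get
\begin{equation*}
\langle P_k\rangle=\frac{1}{(2\pi)^n}\int_{\T^n}F_k\,d\theta .
\end{equation*}
Integrate first over $\theta_k,\dots,\theta_n$, which Fubini permits because $F_k$ is bounded and continuous. Proposition~\ref{prop:torus} gives the constant $1/R_{k,0}$ for every fixed $\theta_1,\dots,\theta_{k-1}$. The remaining normalized integral over $\theta_1,\dots,\theta_{k-1}$ is then $1/R_{k,0}$ times total mass $1$. Finally, \eqref{eq:telescope} gives
\begin{equation*}
\frac{1}{R_{k,0}}=\frac{Z_0}{Z_k}=\frac{\rho_0\beta_0}{\rho_k\beta_k},
\end{equation*}
which is the statement of the theorem.

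The only substantive point is the first step, and specifically its need for continuity of $F_k$. Equidistribution can fail for merely bounded measurable functions, so continuity is essential. This is exactly what Corollary~\ref{cor:pos} supplies, through the uniform lower bound on $|C_n|^2$. Everything else is routine bookkeeping.
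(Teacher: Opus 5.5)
Your proposal is correct and follows the paper's proof essentially step for step. Both arguments use Weyl's criterion tested on characters over the symmetric window $[-\Omega/2,\Omega/2]$, continuity of $F_k$ from Corollary~\ref{cor:pos}, and then Proposition~\ref{prop:torus} followed by averaging over $\theta_1,\dots,\theta_{k-1}$; the only difference is that you spell out the Stone--Weierstrass/$\varepsilon/3$ passage from trigonometric polynomials to continuous functions, which the paper compresses into ``unique ergodicity.''
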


\begin{proof}
Linear independence of $\tau_1,\dots,\tau_n$ over $\Rat$ is equivalent, by Weyl's
equidistribution theorem \cite{Weyl1916}, to unique ergodicity of the linear flow
$\omega\mapsto(\omega\tau_1,\dots,\omega\tau_n)$ on $\T^n$: indeed, testing with a
character $e^{i\langle k,\theta\rangle}$, $k\in\Zed^n\setminus\{0\}$, gives
\begin{equation}
\frac1\Omega\int_{-\Omega/2}^{\Omega/2}e^{i\omega\langle k,\tau\rangle}\,d\omega
=\frac{2\sin\!\big(\Omega\langle k,\tau\rangle/2\big)}{\Omega\langle k,\tau\rangle}
=O\!\left(\frac{1}{\Omega\,|\langle k,\tau\rangle|}\right)\longrightarrow0,
\end{equation}
which is valid precisely when $\langle k,\tau\rangle\ne0$ for all
$k\ne0$. Unique ergodicity means that for continuous functions the time average
converges to the Haar average for \emph{every} initial point, uniformly. Since
$F_k$ is continuous by Lemma~\ref{lem:cont},
\begin{equation}
\langle P_k\rangle
=\lim_{\Omega\to\infty}\frac1\Omega\int_{-\Omega/2}^{\Omega/2}F_k(\omega\tau)\,d\omega
=\frac{1}{(2\pi)^n}\int_{\T^n}F_k\,d\theta .
\end{equation}
Applying Proposition~\ref{prop:torus} and then averaging the resulting constant
over $\theta_1,\dots,\theta_{k-1}$ gives $1/R_{k,0}$.
\end{proof}

The hypothesis holds for a set of full measure in $\Real^n$. In particular it is
satisfied with probability one by the Monte Carlo models of \cite[\S4]{GSH2011},
so Theorem~\ref{thm:main} rigorously justifies what those simulations indicated.

\begin{remark}[a stronger form]\label{rem:stronger}
Proposition~\ref{prop:torus} integrates only over $\theta_k,\dots,\theta_n$,
leaving the shallower phases fixed. Thus the NED of layer $\#k$ is obtained by
averaging over the phases of the layers \emph{below} it only; no assumption on
$\tau_1,\dots,\tau_{k-1}$ is needed. Precisely, if the orbit closure
$H\subset\T^n$ of the flow contains the subtorus
$\{0\}^{k-1}\times\T^{\,n-k+1}$, then $H$ is a union of cosets of that subtorus,
the Haar measure of $H$ disintegrates over them, and
Proposition~\ref{prop:torus} gives the same constant $1/R_{k,0}$ on each coset;
hence $\int_H F_k\,d\mathrm{Haar}_H=1/R_{k,0}$. This is a strictly stronger
statement than \eqref{eq:goal}.
\end{remark}

\section{Equal travel times}\label{sec:goupillaud}

The case $\tau_1=\cdots=\tau_n=\tau$ violates the hypothesis of
Theorem~\ref{thm:main} as strongly as possible: the orbit closure degenerates to
the diagonal circle in $\T^n$. It is also the standard discretization of an
arbitrary velocity profile (a Goupillaud medium \cite{Goupillaud}), so it deserves
a separate treatment. It turns out to follow from a classical fact about
orthogonal polynomials on the unit circle (OPUC).

\subsection{Polynomial structure}

Set $z=e^{i\omega\tau}$ and $u=z^2$. For a polynomial $\Psi$ of degree $m$ write
$\Psi^{R}(u):=u^{m}\bigl(\Psi(1/u^{*})\bigr)^{*}$ for its reversed polynomial.
As everywhere in this paper, $*$ denotes complex conjugation; note that in the
literature on orthogonal polynomials the superscript $*$ is often used for the
reversed polynomial itself, which I write $\Psi^{R}$ throughout.

\begin{lemma}\label{lem:poly}
$\Psi_k(u):=z^kC_k$ is a polynomial of degree $k$ in $u$ with
$|\Psi_k|=|C_k|$ on $|u|=1$, and
\begin{equation}
\Psi_k=p_k\,u\,\Psi_{k-1}+q_k\,\Psi_{k-1}^{R},\qquad \Psi_0=1 .
\label{eq:psirec}
\end{equation}
\end{lemma}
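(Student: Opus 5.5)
The plan is to argue by induction on $k$, working throughout with $\alpha_j=z=e^{i\omega\tau}$ for every $j$ and with the conjugate recursion furnished by Lemma~\ref{lem:conj}. Since $D_{k-1}=C_{k-1}^{*}$, the recursion \eqref{eq:recursion} reads
\begin{equation*}
C_k=p_k z\,C_{k-1}+q_k z^{*}C_{k-1}^{*} .
\end{equation*}
The base case is immediate: $\Psi_0=z^0C_0=1$, a polynomial of degree $0$.

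For the induction step I assume that $\Psi_{k-1}=z^{k-1}C_{k-1}$ is a polynomial of degree $k-1$ in $u=z^2$. I then multiply the display above by $z^k$.

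The first term gives
\begin{equation*}
z^k\,p_k z\,C_{k-1}=p_k z^2\cdot z^{k-1}C_{k-1}=p_k\,u\,\Psi_{k-1}.
\end{equation*}

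For the second term I use $|z|=1$, so that $z^{*}=1/z$ and $z^{k-1}=z^{2(k-1)}(z^{*})^{k-1}$. This gives
\begin{equation*}
z^k z^{*}C_{k-1}^{*}=z^{k-1}C_{k-1}^{*}=u^{k-1}\bigl(z^{k-1}C_{k-1}\bigr)^{*}=u^{k-1}\Psi_{k-1}(u)^{*}.
\end{equation*}
On $|u|=1$ we have $1/u^{*}=u$, so this equals $u^{k-1}\bigl(\Psi_{k-1}(1/u^{*})\bigr)^{*}=\Psi_{k-1}^{R}(u)$. The polynomial $\Psi_{k-1}^{R}$ has degree at most $k-1$. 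Hence \eqref{eq:psirec} holds on the unit circle.

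I will then take the right-hand side of \eqref{eq:psirec} as the definition of $\Psi_k$ as a polynomial, so the identity holds for all $u$. The point requiring care is that $\Psi_{k-1}^{R}$ depends on the formal degree $k-1$ and not on the actual degree, and that the identification with $z^kC_k$ is only claimed on $|u|=1$. This is the main (mild) obstacle, and it is handled by always reversing with respect to the index $k-1$.

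For the degree claim, I note that the coefficient of $u^k$ in $p_k u\Psi_{k-1}$ is $p_k$ times the leading coefficient of $\Psi_{k-1}$, while $q_k\Psi_{k-1}^{R}$ contributes nothing in degree $k$. By induction the leading coefficient of $\Psi_k$ is $\prod_{j\le k}p_j$, which is nonzero because every $p_j>0$ by \eqref{eq:pq}. So $\deg\Psi_k=k$ exactly.

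Finally, $|\Psi_k|=|z|^k|C_k|=|C_k|$ on $|u|=1$, since $|z|=1$ there.
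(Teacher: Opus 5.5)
Your proof is correct and follows essentially the same route as the paper's. You rewrite the recursion using $D_{k-1}=C_{k-1}^{*}$, multiply by $z^k$, identify $z^{k-1}C_{k-1}^{*}=u^{k-1}\Psi_{k-1}^{*}=\Psi_{k-1}^{R}$ on $|u|=1$, and track the leading coefficient $\prod_{j\le k}p_j>0$; your explicit induction and your remark that the circle identity determines a unique polynomial simply spell out what the paper leaves implicit.
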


\begin{proof}
By \eqref{eq:CD} and Lemma~\ref{lem:conj},
$C_k=p_kzC_{k-1}+q_kz^{-1}C_{k-1}^{*}$. Multiplying by $z^k$ and using
$z^{k-1}C_{k-1}^{*}=\bigl(z^{-(k-1)}C_{k-1}\bigr)^{*}
=\bigl(u^{-(k-1)}\Psi_{k-1}\bigr)^{*}=u^{k-1}\Psi_{k-1}^{*}=\Psi_{k-1}^{R}$
on $|u|=1$ gives \eqref{eq:psirec}. The leading coefficient is multiplied by
$p_k>0$ at each step, so $\deg\Psi_k=k$.
\end{proof}

Normalizing, $\Phi_k:=\Psi_k/\prod_{j\le k}p_j$ is monic and satisfies the
Szeg\H{o} recursion
\begin{equation}
\Phi_k=u\,\Phi_{k-1}-a_k\,\Phi_{k-1}^{R},
\qquad a_k:=-\frac{q_k}{p_k}\in(-1,1),
\label{eq:szego}
\end{equation}
so that $a_k$ is the $k$th Verblunsky coefficient. By \eqref{eq:pq},
\begin{equation}
1-a_k^2=\frac{p_k^2-q_k^2}{p_k^2}=\frac{R_k}{p_k^2}.
\label{eq:verb}
\end{equation}

\subsection{Proof via the Bernstein--Szeg\H{o} normalization}

\begin{theorem}\label{thm:goupillaud}
If $\tau_1=\cdots=\tau_n=\tau$, then $\langle P_k\rangle=1/R_{k,0}$ for every
$k=1,\dots,n$.
\end{theorem}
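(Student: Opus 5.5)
The plan is to reduce the frequency average to a single integral over the unit circle $|u|=1$, and then read off that integral from the Bernstein--Szeg\H{o} theorem applied to the monic polynomials $\Phi_k$ of \eqref{eq:szego}.

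\textbf{Reduction to the circle.} By Lemma~\ref{lem:poly}, $P_k(\omega)=|C_{k-1}|^2/|C_n|^2=|\Psi_{k-1}(u)|^2/|\Psi_n(u)|^2$ with $u=e^{2i\omega\tau}$. This is a continuous function of $u$ whose denominator is bounded below by Corollary~\ref{cor:pos}. It is therefore periodic in $\omega$ with period $\pi/\tau$, so the limit defining $\langle P_k\rangle$ equals the average over one period. Substituting $u=e^{i\vartheta}$ gives
\[
\langle P_k\rangle=\frac{1}{2\pi}\int_0^{2\pi}\frac{|\Psi_{k-1}(e^{i\vartheta})|^2}{|\Psi_n(e^{i\vartheta})|^2}\,d\vartheta
=\Bigl(\prod_{j=k}^{n}p_j^{-2}\Bigr)\frac{1}{2\pi}\int_0^{2\pi}\frac{|\Phi_{k-1}|^2}{|\Phi_n|^2}\,d\vartheta .
\]
Here I use $\Psi_m=\bigl(\prod_{j\le m}p_j\bigr)\Phi_m$; the factors $p_1,\dots,p_{k-1}$ cancel between numerator and denominator.

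\textbf{Bernstein--Szeg\H{o} step.} The Verblunsky coefficients $a_1,\dots,a_n$ are real and lie in $(-1,1)$ by \eqref{eq:szego}. The Szeg\H{o} recursion then guarantees that $\Phi_n$ has all its zeros in the open disk. The classical Bernstein--Szeg\H{o} theorem states that
\[
d\mu=\frac{\prod_{j=1}^{n}(1-a_j^2)}{|\Phi_n(e^{i\vartheta})|^2}\,\frac{d\vartheta}{2\pi}
\]
is a probability measure whose monic orthogonal polynomials are exactly $\Phi_0,\dots,\Phi_n$ (with Verblunsky coefficients $a_1,\dots,a_n,0,0,\dots$). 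Szeg\H{o}'s norm formula then gives $\|\Phi_{m}\|_{L^2(\mu)}^2=\prod_{j=1}^{m}(1-a_j^2)$ for $m\le n$. Taking $m=k-1$ and dividing by the normalizing constant yields
\[
\frac{1}{2\pi}\int_0^{2\pi}\frac{|\Phi_{k-1}|^2}{|\Phi_n|^2}\,d\vartheta=\prod_{j=k}^{n}\frac{1}{1-a_j^2}=\prod_{j=k}^{n}\frac{p_j^2}{R_j},
\]
where the last equality is \eqref{eq:verb}. Multiplying by $\prod_{j=k}^n p_j^{-2}$ leaves $\prod_{j=k}^n R_j^{-1}=1/R_{k,0}$ by \eqref{eq:telescope}.

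\textbf{Main obstacle.} The delicate point is matching conventions with the OPUC literature. The recursion \eqref{eq:szego} uses $-a_k\Phi^R_{k-1}$ with real $a_k$, whereas the standard form is $\Phi_k=u\Phi_{k-1}-\bar\alpha_{k}\Phi^{*}_{k-1}$. I must also check that $\Phi^R$ as defined here agrees with the standard reversed polynomial on $|u|=1$, and that Bernstein--Szeg\H{o} is stated with normalization $\prod(1-|\alpha_j|^2)$ for a degree-$n$ polynomial. If citing the theorem causes trouble, I will instead verify directly that $\Phi_{k-1}\perp u^{j}$ for $0\le j<k-1$ in $L^2(|\Phi_n|^{-2}d\vartheta)$, using the standard computation based on $\Phi_n^R$ having no zeros in the closed disk. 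I will then compute the norms by induction from $\|\Phi_m\|^2=(1-a_m^2)\|\Phi_{m-1}\|^2$, starting at $m=n$, where $\int|\Phi_n|^2/|\Phi_n|^2=2\pi$, and descending.
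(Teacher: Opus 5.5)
Your proposal is correct and follows the paper's proof essentially step for step. You reduce $\langle P_k\rangle$ to one period in $u=e^{2i\omega\tau}$, factor out $\prod_{j=k}^{n}p_j^{-2}$, evaluate $\frac{1}{2\pi}\int|\Phi_{k-1}|^2/|\Phi_n|^2\,d\vartheta=\prod_{j=k}^{n}(1-a_j^2)^{-1}$ from the Bernstein--Szeg\H{o} measure and Szeg\H{o}'s norm formula, and finish with \eqref{eq:verb}; the only difference is that the paper proves this circle identity in a self-contained appendix (Lemma~\ref{lem:appmain}, using the Poisson kernel and the mean value property) rather than citing it, and your convention checks ($\Phi^R$ is the standard reversal, $\bar\alpha_{k-1}=a_k$) do go through.
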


\begin{proof}
As $\omega$ ranges over $\Real$, $\theta=\omega\tau$ is uniformly distributed on
the circle, and so is $u=e^{2i\theta}$ (it wraps twice). Hence
\begin{equation}
\langle P_k\rangle
=\frac{1}{2\pi}\int_0^{2\pi}\frac{|\Psi_{k-1}|^2}{|\Psi_n|^2}\,d\arg u
=\frac{\prod_{j\le k-1}p_j^2}{\prod_{j\le n}p_j^2}\cdot
\frac{1}{2\pi}\int_0^{2\pi}\frac{|\Phi_{k-1}|^2}{|\Phi_n|^2}\,d\arg u .
\label{eq:gp1}
\end{equation}
By Lemma~\ref{lem:appmain} of Appendix~\ref{app:opuc}, applied with $j=k-1$,
\begin{equation}
\frac{1}{2\pi}\int_0^{2\pi}\frac{|\Phi_{k-1}|^2}{|\Phi_n|^2}\,d\arg u
=\prod_{j=k}^{n}\frac{1}{1-a_j^2}.
\end{equation}
Substituting into \eqref{eq:gp1} and using \eqref{eq:verb},
\begin{equation}
\langle P_k\rangle=\frac{1}{\prod_{j=k}^{n}p_j^2}
\prod_{j=k}^{n}\frac{p_j^2}{R_j}
=\prod_{j=k}^{n}\frac{1}{R_j}=\frac{1}{R_{k,0}} . \qedhere
\end{equation}
\end{proof}

\begin{remark}
For $k=1$, \eqref{eq:gp1} reduces to
$\int d\arg u/(2\pi|\Phi_n|^2)=\prod_j(1-a_j^2)^{-1}$, which is the classical
Bernstein--Szeg\H{o} normalization, a fact going back to Szeg\H{o}'s papers of
1920--21 \cite{Szego1975}; see also \cite{Simon2005}. Appendix~\ref{app:opuc}
gives a short self-contained proof in the present notation, so that no external
result is needed. The Szeg\H{o} product alone is not the
impedance ratio: only together with the prefactor does one get
$\prod_j p_j^2(1-a_j^2)=\prod_j R_j=R_{1,0}$. Thus the equal-travel-time case of
NED conservation is a restatement of a classical identity, though not through a
naive one-to-one correspondence.
\end{remark}

\section{Reduction by subdivision, and the remaining case}\label{sec:reduction}

Theorems~\ref{thm:main} and~\ref{thm:goupillaud} cover the two extremes:
$\Rat$-independent travel times, and all travel times equal. In this section, I
show that subdivision of layers by reflectionless interfaces extends
Theorem~\ref{thm:goupillaud} to \emph{all commensurable} travel times, and
reduces the general case to a canonical form. This section replaces a conjecture
stated in an earlier draft of this note, which the reduction below renders
unnecessary in the commensurable case.

\subsection{Reflectionless subdivision}

\begin{lemma}[transparency]\label{lem:transparent}
Suppose layer $\#j$ and layer $\#(j+1)$ have identical material properties, so
that $R_j=1$. Then $p_j=1$, $q_j=0$, $C_j=e^{i\theta_j}C_{j-1}$ and in particular
$|C_j|=|C_{j-1}|$. Consequently, splitting a layer of travel time
$\tau=\tau'+\tau''$ into two sublayers of the same material with travel times
$\tau'$ and $\tau''$ changes neither $|C_j|$ at any original interface nor any
$\langle P_k\rangle$.
\end{lemma}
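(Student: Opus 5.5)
The plan is to read off the first assertions directly from the definitions and then handle the splitting statement by relabelling the layers. If $R_j=1$, then by definition $p_j=\tfrac12(1+R_j)=1$ and $q_j=\tfrac12(1-R_j)=0$. Substituting into \eqref{eq:CD} gives $C_j=\zeta_j=\alpha_jC_{j-1}$; on the torus this is $C_j=e^{i\theta_j}C_{j-1}$. Since $|e^{i\theta_j}|=1$, we get $|C_j|=|C_{j-1}|$. Equivalently, this is Lemma~\ref{lem:quad} with $R_j=1$: $|C_j|^2=|\zeta_j|^2$.

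For the splitting claim, I will compare the original $n$-layer structure with the $(n+1)$-layer structure obtained by replacing layer $\#m$ (travel time $\tau=\tau'+\tau''$) with two sublayers $\#m'$ and $\#m''$. Both sublayers have impedance $Z_m$, and their travel times are $\tau'$ and $\tau''$. The interface between them has ratio $Z_m/Z_m=1$. The interface below $\#m''$ has ratio $Z_m/Z_{m+1}=R_m$, the same as the original interface below $\#m$, and all other ratios are unchanged. Let $\tilde C$ denote the coefficients of the new structure. Below $\#m$ the two recursions agree, since they use the same $p,q$ and the same data.

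At the split, transparency gives $\tilde C_{m'}=e^{i\omega\tau'}C_{m-1}$. Then
\[
\tilde\zeta_{m''}=e^{i\omega\tau''}\tilde C_{m'}=e^{i\omega(\tau'+\tau'')}C_{m-1}=\zeta_m ,
\]
so $\tilde C_{m''}=p_m\zeta_m+q_m\zeta_m^{*}=C_m$ by \eqref{eq:CD}. This identity holds exactly, not just in modulus. From there on the two recursions are identical by induction, so $\tilde C$ equals $C$ at every original interface, including $\tilde C_{n+1}=C_n$.

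The point requiring care is that the phase must be additive: $e^{i\omega\tau'}e^{i\omega\tau''}=e^{i\omega\tau}$. This is automatic on the physical line $\omega\mapsto\omega\tau$, but not for independent torus variables. So the statement should be read as a statement about functions of $\omega$ (or on the subtorus where $\theta_{m'}+\theta_{m''}=\theta_m$). Consequently $\tilde P_k(\omega)=P_k(\omega)$ pointwise for every original layer $k$, because $P_k$ is built from $|C_{k-1}|^2$ and $|C_n|^2$, both unchanged. Averaging then gives equal $\langle P_k\rangle$.

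The two new sublayers both inherit $P_m$. For $\#m'$ this is clear, since its numerator is $|\tilde C_{m-1}|^2=|C_{m-1}|^2$. For $\#m''$ the numerator is $|\tilde C_{m'}|^2=|C_{m-1}|^2$ by the modulus identity. This is consistent with both sublayers having impedance $Z_m$.
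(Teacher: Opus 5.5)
Your proof is correct and is the paper's argument written out in full. The first part is read off from \eqref{eq:CD} with $p_j=1$, $q_j=0$. Your identity $\tilde\zeta_{m''}=e^{i\omega(\tau'+\tau'')}C_{m-1}=\zeta_m$, which gives $\tilde C_{m''}=C_m$ exactly, is the explicit form of the paper's remark that the composite propagator of the two sublayers equals that of the undivided layer. Your caveat is also accurate: the argument needs phase additivity, which holds on the physical line $\omega\mapsto\omega\tau$ but not for independent torus variables, and this matches the paper's choice to phrase the splitting claim in $\omega$.
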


\begin{proof}
Immediate from \eqref{eq:CD} with $p_j=1$, $q_j=0$. The composite propagator
across the two sublayers is $e^{i\omega(\tau'+\tau'')}$ times the identity on the
relevant factor, which is the propagator of the undivided layer.
\end{proof}

Note also that a fictitious interface is harmless in the OPUC picture: $R_j=1$
corresponds to the Verblunsky coefficient $a_j=0$, which is admissible
($|a_j|<1$), and \eqref{eq:verb} reads $1-0=1=R_j/p_j^2$.

\begin{theorem}[commensurable travel times]\label{thm:commensurable}
Suppose there are $\tau_0>0$ and positive integers $m_1,\dots,m_n$ with
$\tau_k=m_k\tau_0$ for all $k$. Then $\langle P_k\rangle=1/R_{k,0}$ for every
$k=1,\dots,n$.
\end{theorem}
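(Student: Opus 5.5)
The plan is to reduce to Theorem~\ref{thm:goupillaud} by reflectionless subdivision, using Lemma~\ref{lem:transparent}. Subdivide each original layer $\#k$ into $m_k$ sublayers of the same material, each of travel time $\tau_0$. The result is a Goupillaud medium with $N=\sum_k m_k$ layers, all of travel time $\tau_0$. Index the new layers by $1,\dots,N$. The original layer $\#k$ then occupies the block of new indices $s_{k-1}+1,\dots,s_k$, where $s_k=m_1+\cdots+m_k$ and $s_0=0$.

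In the refined medium, the impedance ratios $\tilde R_i$ equal $1$ at every internal (fictitious) interface. At the bottom of each block, $\tilde R_{s_k}=R_k$. The refined recursion coefficients are $\tilde C_i$. Lemma~\ref{lem:transparent}, applied inductively, gives $|\tilde C_{s_k}|=|C_k|$ for every $k$ and all $\omega$, and $|\tilde C_i|=|\tilde C_{s_{k-1}}|$ for $s_{k-1}\le i<s_k$. The induction only needs that the recursion \eqref{eq:recursion} depends on $C_{k-1}$ through $\zeta_k=\alpha_kC_{k-1}$ (Lemma~\ref{lem:quad}). A pure phase factor accumulated over fictitious interfaces can therefore be absorbed into $\zeta$: the two-step propagator multiplies by $e^{i\omega(\tau'+\tau'')}$ exactly as for the undivided layer. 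In particular $|\tilde C_N|=|C_n|$.

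Next, identify the physical quantity. The amplitude $A_k$ in the top sublayer of original layer $\#k$ is $\tilde C_{s_{k-1}}A_1$, so
\[
P_k(\omega)=\frac{|C_{k-1}|^2}{|C_n|^2}=\frac{|\tilde C_{s_{k-1}}|^2}{|\tilde C_N|^2}=\tilde P_{s_{k-1}+1}(\omega)
\]
pointwise in $\omega$. Theorem~\ref{thm:goupillaud} applied to the refined medium gives $\langle\tilde P_{s_{k-1}+1}\rangle=\prod_{i=s_{k-1}+1}^{N}\tilde R_i^{-1}$. In this product every fictitious factor equals $1$, and the surviving factors are $\tilde R_{s_j}=R_j$ for $j=k,\dots,n$. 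The product is therefore $\prod_{j=k}^n R_j^{-1}=1/R_{k,0}$ by \eqref{eq:telescope}.

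The only delicate point is the bookkeeping that identifies each new interface's $\tilde R_i$ and confirms that the bottom sublayer of the last block meets the basement with $\tilde R_N=Z_n/Z_0=R_n$. One should also check that Theorem~\ref{thm:goupillaud} places no restriction such as $R_j\neq1$. It does not: $R_j=1$ gives $a_j=0$, which is admissible, as noted after Lemma~\ref{lem:transparent}. Everything else is direct substitution.
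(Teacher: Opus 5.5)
Your proposal is correct and follows the paper's proof. You subdivide each layer into $m_k$ reflectionless sublayers of travel time $\tau_0$, use Lemma~\ref{lem:transparent} to see that nothing changes at the original interfaces, and apply Theorem~\ref{thm:goupillaud} at the uppermost sublayer of each original layer. Your explicit product over the $\tilde R_i$, in which the fictitious factors equal $1$, is just an expanded form of the paper's step of reading $Z'_j\langle P'_j\rangle=Z_0$ at those sublayers.
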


\begin{proof}
Subdivide layer $\#k$ into $m_k$ sublayers of the same material, each of travel
time $\tau_0$. By Lemma~\ref{lem:transparent} all $\langle P_k\rangle$ are
unchanged, and the uppermost sublayer of the original layer $\#k$ carries the
amplitude $A_k$ and the impedance $Z_k$. The subdivided structure has
$n'=\sum_k m_k$ layers, all with travel time $\tau_0$, so
Theorem~\ref{thm:goupillaud} applies to it and gives
$Z'_j\langle P'_j\rangle=Z_0$ for every sublayer $\#j$. Reading this at the
uppermost sublayer of each original layer yields the claim.
\end{proof}

\subsection{Canonical form for the general case}

Commensurability is strictly stronger than $\Rat$-linear dependence: for
$n=3$ and $\tau=(1,\sqrt2,1+\sqrt2)$ the travel times satisfy the relation
$\tau_3=\tau_1+\tau_2$, yet no common $\tau_0$ exists. Nevertheless a subdivision
is still available, at the price of using several distinct sublayer travel times.

\begin{lemma}[positive integral basis]\label{lem:cone}
Let $\tau_1,\dots,\tau_n>0$ span a $\Rat$-vector space $V\subset\Real$ of
dimension $d$. Then there exist $\nu_1,\dots,\nu_d>0$, linearly independent over
$\Rat$, and non-negative integers $m_{ki}$ such that
\begin{equation}
\tau_k=\sum_{i=1}^{d}m_{ki}\,\nu_i\qquad(k=1,\dots,n).
\end{equation}
\end{lemma}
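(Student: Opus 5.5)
The plan is to encode the $\Rat$-linear structure in $\Rat^d$ and realise the required basis as the edge vectors of a rational simplicial cone containing all $\tau_k$. If $d=1$ I take $\nu_1$ to be a generator of the cyclic group generated by $\tau_1,\dots,\tau_n$ in $\Real$, made positive. So assume $d\ge2$.

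Fix a $\Rat$-basis $e_1,\dots,e_d$ of $V$. This gives a $\Rat$-linear isomorphism $\Rat^d\to V$, and I extend it to an $\Real$-linear functional $L:\Real^d\to\Real$. By construction $L$ is injective on $\Rat^d$. Let $v_k\in\Rat^d$ be the coordinate vector of $\tau_k$, so $L(v_k)=\tau_k>0$, and let $C=\cone(v_1,\dots,v_n)\subset\Real^d$. Because $L>0$ on every generator, $L>0$ on $C\setminus\{0\}$, so $C$ is a pointed rational polyhedral cone.

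\emph{Reduction.} It suffices to find linearly independent rational vectors $w_1,\dots,w_d$ such that
\[
C\subset S:=\cone(w_1,\dots,w_d)\quad\text{and}\quad L>0 \text{ on } S\setminus\{0\}.
\]
Given these, write $v_k=\sum_i c_{ki}w_i$. The $c_{ki}$ are rational because the $w_i$ are a rational basis, and non-negative because $v_k\in S$ and $S$ is simplicial. Let $N$ be a common denominator of all $c_{ki}$ and set $\nu_i:=L(w_i)/N$. Then $\nu_i>0$ and $\tau_k=\sum_i (Nc_{ki})\,\nu_i$ with $Nc_{ki}\in\Zed_{\ge0}$. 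Finally, a rational relation $\sum r_i\nu_i=0$ means $L\bigl(\sum r_iw_i\bigr)=0$. Since $\sum r_iw_i\in\Rat^d$, injectivity of $L$ on $\Rat^d$ forces $\sum r_iw_i=0$, hence $r=0$. So the $\nu_i$ are $\Rat$-independent.

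\emph{Constructing $S$.} I pass to a slice. Since $L>0$ on the compact set $C\cap\{|x|=1\}$, a rational functional $\ell$ close enough to $L$ is also positive there. The slice $P=C\cap\{\ell=1\}$ is then a compact rational polytope lying in the affine hyperplane $E=\{\ell=1\}\cong\Real^{d-1}$. Inside $E$, the set $H=\{L>0\}\cap E$ is an open half-space of $E$. It is a genuine half-space: the only alternative, $L$ being proportional to $\ell$ on $\Real^d$, would make $L$ vanish on a nonzero rational vector in $\ker\ell$, impossible for $d\ge2$. Moreover $P\subset\{L\ge\delta\}$ for some $\delta>0$. I now need a $(d-1)$-simplex $\Delta\subset E$ with rational vertices such that
\[
P\subset\Delta\subset H.
\]
Its vertices $w_1,\dots,w_d$ are affinely independent in a hyperplane not through $0$, hence linearly independent, and $S=\cone(\Delta)\supset C$ with $L>0$ on $S\setminus\{0\}$, as required.

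\emph{The main obstacle} is building $\Delta$ and keeping it rational. My first attempt: take $d-1$ vertices spread far apart near the level set $\{L=\delta/2\}\cap E$ and one vertex deep inside $H$. Scaled large enough, such a simplex contains the bounded set $P\subset\{L\ge\delta\}$ with some positive margin, while its vertices stay in $\{L\ge\delta/4\}\subset H$. Then I perturb each vertex to a nearby rational point of $E$, using density of $\Rat^d\cap E$ in $E$ since $\ell$ is rational. Both the containment margin and the positivity of $L$ at the vertices survive a sufficiently small perturbation. Positivity on $\Delta$ follows by convexity, since $L$ is affine on $E$.
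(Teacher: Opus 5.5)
Your proof is correct and follows the same architecture as the paper's: $\Rat$-coordinates via a basis of $V$, the pointed cone $C$, a rational slice $\{\ell=1\}$ with $\ell$ close to $L$, an enclosing rational $(d-1)$-simplex inside the positive half-space, and then common denominators plus injectivity of $L$ on $\Rat^d$. The only variation is how you obtain the simplex: the paper takes $f$ so close to $\lambda$ that the half-space contains a large ball around $P$, whereas you allow any admissible $\ell$ and use the unboundedness of the half-space, placing a tall simplex with its base on $\{L=\delta/2\}$ and its apex far inside. You also check explicitly that $H$ is a genuine half-space, which the paper leaves implicit.
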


\begin{proof}
For $d=1$ this is commensurability, so assume $d\ge2$. Fix a $\Rat$-basis of $V$
and let $\iota:\Rat^d\to\Real$, $\iota(x)=\langle x,\lambda\rangle$, be the
induced injection, where $\lambda\in\Real^d$ collects the basis values. Let
$v_k\in\Rat^d$ be the coordinate vector of $\tau_k$, so
$\langle v_k,\lambda\rangle=\tau_k>0$. Thus all $v_k$ lie in the open half-space
$H=\{x\in\Real^d:\langle x,\lambda\rangle>0\}$, and
$C:=\cone(v_1,\dots,v_n)$ satisfies $C\setminus\{0\}\subset H$; in particular $C$
is pointed.

Choose $f\in\Rat^d$ close to $\lambda$, so that $\langle v_k,f\rangle>0$ for all
$k$, and put $P:=\mathrm{conv}\{v_k/\langle v_k,f\rangle\}$, a compact set in the
rational hyperplane $\{\langle\cdot,f\rangle=1\}$. Within that hyperplane,
\begin{equation}
U:=H\cap\{\langle\cdot,f\rangle=1\}
=\{x:\langle x,f\rangle=1,\ \langle x,\lambda-f\rangle>-1\}
\end{equation}
is an open half-space whose boundary lies at distance
$\asymp\|\lambda-f\|^{-1}$ from $P$. Taking $f$ close enough to $\lambda$, $U$
contains a Euclidean ball around $P$ of radius exceeding $d\cdot\mathrm{diam}(P)$,
hence contains a $(d-1)$-simplex $T\supseteq P$; perturbing the vertices of $T$
slightly, they may be taken rational, with $T\supseteq P$ and $T\subset U$ still.

Let $w_1,\dots,w_d\in\Rat^d$ be the vertices of $T$. They form a basis of
$\Rat^d$, and $\cone(w_1,\dots,w_d)\supseteq C$, so each $v_k=\sum_i c_{ki}w_i$
with $c_{ki}\in\Rat_{\ge0}$. Let $N$ be a common denominator of the $c_{ki}$ and
set $\nu_i:=\langle w_i,\lambda\rangle/N$ and $m_{ki}:=Nc_{ki}\in\Zed_{\ge0}$.
Then $\nu_i>0$ since $w_i\in U\subset H$, the $\nu_i$ are $\Rat$-independent since
$\iota$ is injective and the $w_i$ form a basis, and
$\tau_k=\langle v_k,\lambda\rangle=\sum_i m_{ki}\nu_i$.
\end{proof}

\begin{example}
For $\tau=(1,\sqrt2,1+\sqrt2)$ take $\nu=(1,\sqrt2)$, with coefficient vectors
$(1,0),(0,1),(1,1)$. For $\tau=(\sqrt2,\sqrt3,\sqrt3-\sqrt2)$ the naive basis
$(\sqrt2,\sqrt3)$ gives the inadmissible coefficients $(-1,1)$ for the third
entry, but $\nu=(\sqrt3-\sqrt2,\sqrt2)$ gives $(0,1),(1,1),(1,0)$. For
$\tau=(1,\sqrt2,3-\sqrt2)$ one may take
$\nu=\bigl(\tfrac{\sqrt2}{3},\tfrac{3-\sqrt2}{3}\bigr)$ with coefficients
$(1,1),(3,0),(0,3)$.
\end{example}

Combining Lemmas~\ref{lem:transparent} and~\ref{lem:cone}:

\begin{proposition}[canonical form]\label{prop:canonical}
Every layered structure is equivalent, after subdivision by reflectionless
interfaces, to one whose travel times take only $d$ distinct values
$\nu_1,\dots,\nu_d$, linearly independent over $\Rat$, where $d=\dim_\Rat\langle
\tau_1,\dots,\tau_n\rangle$. The values $\langle P_k\rangle$ are unchanged.
\end{proposition}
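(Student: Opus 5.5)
The plan is to combine Lemma~\ref{lem:cone} with Lemma~\ref{lem:transparent}, in the same way as the proof of Theorem~\ref{thm:commensurable}. Let $d=\dim_\Rat\langle\tau_1,\dots,\tau_n\rangle$. Lemma~\ref{lem:cone} supplies $\Rat$-independent $\nu_1,\dots,\nu_d>0$ and non-negative integers $m_{ki}$ with $\tau_k=\sum_i m_{ki}\nu_i$. Since $\tau_k>0$, each row $(m_{k1},\dots,m_{kd})$ has at least one positive entry. This guarantees that the subdivision of layer $\#k$ described below is non-empty.

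Next, subdivide each layer $\#k$ into $s_k:=\sum_i m_{ki}\ge1$ sublayers of the same material as layer $\#k$, namely $m_{ki}$ sublayers of travel time $\nu_i$ for each $i$, stacked in any fixed order. Their thicknesses are $\nu_i\beta_k$, and these sum to $h_k$. The new structure has $n'=\sum_k s_k$ layers. Every internal interface within an original layer has $R=1$. Lemma~\ref{lem:transparent}, applied once per fictitious interface by induction on the number of splits, shows three things. First, $|C|$ at every original interface is unchanged as a function of $\omega$. In particular the bottom value $|C_n|$ equals $|C'_{n'}|$, because the basement is untouched. Second, the uppermost sublayer of original layer $\#k$ has index $1+\sum_{l<k}s_l$, impedance $Z_k$, and up-going amplitude $A_k$. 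Its $|C'_{\cdot}|$ in the numerator coincides with $|C_{k-1}|$. Hence $P_k(\omega)$ equals the corresponding $P'$ pointwise in $\omega$, so every $\langle P_k\rangle$ is unchanged. Third, the set of travel times of the new structure is exactly $\{\nu_i: \text{some } m_{ki}>0\}$.

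The main obstacle is to ensure that exactly $d$ distinct values occur, that is, that every $\nu_i$ is actually used. Suppose some column satisfies $m_{ki}=0$ for all $k$. Then every $\tau_k$ lies in the $\Rat$-span of the remaining $\nu$'s. Since the $\nu$'s are $\Rat$-independent, that span has dimension less than $d$, which contradicts the fact that the $\tau_k$ span a space of dimension $d$. So every column contains a positive entry, and the $d$ distinct values $\nu_1,\dots,\nu_d$ all appear. Together with the $\Rat$-independence provided by Lemma~\ref{lem:cone}, this establishes the canonical form.

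The only other point I would check is that the iterated use of Lemma~\ref{lem:transparent} is legitimate. The lemma is stated for a single split, and each split preserves $|C|$ at all previously existing interfaces, so induction on the number of splits settles it. The pointwise identity in $\omega$ then passes directly to the limit defining $\langle P_k\rangle$.
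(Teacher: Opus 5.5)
Your proposal is correct and follows the paper's own argument: the paper obtains the proposition by combining Lemma~\ref{lem:cone} with Lemma~\ref{lem:transparent}, as in the proof of Theorem~\ref{thm:commensurable}. The paper leaves implicit that every row and every column of $(m_{ki})$ contains a positive entry, so that each layer is actually subdivided and exactly $d$ distinct values $\nu_i$ occur; your checks of these points, including the dimension argument for the columns, are sound.
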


In this canonical form, the phase of every sublayer is one of $d$ independent
variables $\varphi_i=\omega\nu_i$, and the flow
$\omega\mapsto(\varphi_1,\dots,\varphi_d)$ is equidistributed on $\T^d$ by Weyl's
theorem. The two theorems proved above are the endpoints:
\begin{itemize}
\item $d=1$: all sublayers share a single phase. Theorem~\ref{thm:commensurable}.
\item $d=n'$: no phase is repeated. Theorem~\ref{thm:main}.
\end{itemize}

\subsection{The open case}

\begin{problem}\label{prob:open}
Prove \eqref{eq:goal} for a structure in canonical form with $1<d<n'$, that is,
when the $d$ independent phase variables are shared by more than $d$ sublayers.
\end{problem}

The single obstruction is identified precisely. The layer-peeling
Lemma~\ref{lem:peel} relies on \eqref{eq:dependence}: $C_{j-1}$ must not depend on
the variable being integrated. When two sublayers share a phase variable this
fails, and the iteration of Proposition~\ref{prop:torus} breaks down after the
first repetition. Equivalently, expanding as in Lemma~\ref{lem:quad} with
$u_j:=\zeta_j/\zeta_j^{*}$ and using $p_j>|q_j|$, the Poisson kernel expansion
\begin{equation}
\frac{1}{|C_j|^2}
=\frac{1}{R_j}\cdot\frac{1}{|C_{j-1}|^2}
\sum_{m\in\Zed}\left(-\frac{q_j}{p_j}\right)^{|m|}u_j^{\,m}
\label{eq:poisson}
\end{equation}
has $m=0$ term equal to the peeling result, and one must show that the
contributions of the terms $m\ne0$ average to zero. This is automatic when the
variable is not repeated, and follows indirectly from
Theorem~\ref{thm:goupillaud} when $d=1$.

A $d$-variable analogue of the Bernstein--Szeg\H{o} theory---orthogonal
polynomials on the distinguished boundary of the polydisk---appears to be the
natural setting; the polydisk framework developed by Gibson \cite{Gibson2013} for
layered scattering may be relevant, although his results address a different
quantity (see Section~\ref{sec:priorwork}). I note that the set of travel-time
vectors with $1<d<n$ has Lebesgue measure zero, and that any structure can be
approximated to arbitrary accuracy by a commensurable one, so
Problem~\ref{prob:open} does not obstruct applications. Numerically the identity
holds in the intermediate case as well (Section~\ref{sec:numerics}).

\section{Oblique incidence}\label{sec:oblique}

\cite[Appendix C]{GSH2011} shows that if the incidence angle $\delta$ and the
velocities satisfy $\beta_0>\beta_k\sin\delta$ in every layer, then oblique
incidence reduces to the vertical problem upon replacing
\begin{equation}
\beta_k\ \longrightarrow\ c_k=\beta_k\Bigl(1-\tfrac{\beta_k^2}{\beta_0^2}
\sin^2\delta\Bigr)^{-1/2},
\qquad
Z_k\ \longrightarrow\ \tilde Z_k=\frac{\rho_k\beta_k^2}{c_k}.
\end{equation}
Under that condition $c_k$ is real and $\tilde R_k=\tilde Z_k/\tilde Z_{k+1}$ is a
positive real, so \eqref{eq:recursion} retains exactly the form used here. Every
lemma, proposition and theorem above therefore applies verbatim with
$R_k\to\tilde R_k$ and $\tau_k\to h_k/c_k$, yielding
$\tilde Z_k\langle P_k\rangle=\tilde Z_0$. The independence hypotheses are to be
imposed on $h_k/c_k$.

\section{Attenuation}\label{sec:damping}

Everything above assumes elasticity. With attenuation, $R_k$ becomes complex,
Lemma~\ref{lem:quad} no longer represents $|C_k|^2$ as a real quadratic form in
$(\Rp\zeta_k,\Ip\zeta_k)$, and layer peeling fails. This is not an artefact of the
method: NED is known to decrease with damping, and that decrease is precisely
what makes it useful for the direct estimation of near-surface damping
\cite{Goto2013}. The present results fix the elastic baseline against which such
estimates are made.

\section{Numerical verification}\label{sec:numerics}

Each step was checked numerically. Following \cite[\S4]{GSH2011}, S-wave
velocities were drawn uniformly from $[10,700]\,\mathrm{m/s}$ and densities from
$[1000,2000]\,\mathrm{kg/m^3}$ for every layer and for the basement. For each of
20 samples per setting, I evaluated
$\max_k\bigl|Z_k\langle P_k\rangle/Z_0-1\bigr|$.

\begin{center}
\begin{tabular}{lccc}
\toprule
travel times & $n=2$ & $n=3$ & $n=5$\\
\midrule
uniform random ($\Rat$-independent) & $3.8\times10^{-4}$ & $9.6\times10^{-4}$ & $6.3\times10^{-3}$\\
equal, $\tau_k\equiv\tau$ & $2.9\times10^{-15}$ & $1.7\times10^{-15}$ & $4.0\times10^{-15}$\\
commensurable, integer ratios & $1.6\times10^{-15}$ & $1.3\times10^{-15}$ & $2.9\times10^{-15}$\\
\bottomrule
\end{tabular}
\end{center}

For the periodic settings an equispaced quadrature over one period is exact up to
rounding, and the residuals are at the level of double precision. For
$\Rat$-independent travel times the frequency integral must be truncated, and the
residual is dominated by that truncation. For $n=10$ the required resolution grows
rapidly; convergence was verified instead, e.g.\ for one commensurable ten-layer
sample the residual fell from $2.7\times10^{-2}$ to $2.0\times10^{-5}$ when the
number of quadrature points per period was increased from $2\times10^6$ to
$8\times10^6$, confirming that the discrepancy is a quadrature artefact.

The intermediate case of Problem~\ref{prob:open} was tested with $n=3$ and
$\tau=(1,\sqrt2,1+\sqrt2)$ and $\tau=(\sqrt2,\sqrt3,\sqrt3-\sqrt2)$ (both $d=2$);
residuals were $2.7\times10^{-6}$ and $8.4\times10^{-6}$ respectively, comparable
to the $\Rat$-independent control computed with the same truncation.

Finally, the stronger statement of Remark~\ref{rem:stronger} was verified for
$n=6$, $k=4$: fixing $\theta_1,\theta_2,\theta_3$ at three different values and
averaging over $\theta_4,\theta_5,\theta_6$ only reproduced $1/R_{4,0}$ in each
case. The average over the remaining $3$-torus was computed with a
tensor-product trapezoidal rule, which converges spectrally for this smooth
periodic integrand; with $128$ nodes per axis the three values agreed with
$1/R_{4,0}$ to within $10^{-8}$, independently of the fixed shallow phases.

\section{Relation to previous work}\label{sec:priorwork}

\subsection*{Equal travel times}
As shown in Section~\ref{sec:goupillaud}, NED conservation for equal travel times
is a consequence of the Bernstein--Szeg\H{o} normalization, which dates from
Szeg\H{o}'s work of 1920--21 \cite{Szego1975}. The dictionary between
equal-travel-time layered media and the Szeg\H{o} recursion---reflection
coefficients corresponding to Verblunsky coefficients---was established in the
1970s and 1980s \cite{BrucksteinKailath,UrsinBerteussen}. In this case, therefore,
all the mathematics required was available long before \cite{GSH2011}. I have not
found the identity stated anywhere as a conservation law through the layers.

\subsection*{General travel times}
The lifting of Section~\ref{sec:torus} coincides with the framework used by
Gibson \cite{Gibson2013} for scattering in layered media, who shows that the
restriction of a Hardy function on the polydisk to a generic line on the torus is
Besicovitch almost periodic with Besicovitch norm equal to the $H^2$ norm. That
statement plays the role that Weyl's theorem plays here.

The objects and the assertions differ, however. Gibson studies the reflection
Green's function of a stack sandwiched between two half-spaces; the boundary
condition is not a free surface. In his parametrization the present configuration
is the boundary point $R=(\dots,1)$, where $|\widehat G|\equiv1$ holds exactly, so
his main theorem---a probabilistic estimate showing the power spectrum to be
approximately constant for many layers---is vacuous there. Moreover
$|A_k/A_0|^2$ concerns waves circulating \emph{inside} the medium: because they
are totally reflected at the free surface they never leave, so the quantity is not
determined by the scattering balance $\sum a_j^2+\sum b_j^2=1$ of
\cite[Eq.~(1.10)]{Gibson2013}. Finally, no analogue of the layer peeling of
Lemma~\ref{lem:peel} occurs there.

\subsection*{Summary}
I have found no prior statement or proof of NED conservation. On the other hand
the tools used here were all available when \cite{GSH2011} was written, and the
equal-travel-time case is a direct consequence of classical results. The
contribution of \cite{GSH2011} is therefore best regarded as the identification of
the quantity and its formulation as a conservation law with seismological
applications, rather than as a technically difficult theorem; the present note
supplies the proof that was missing.

\appendix

\section{A self-contained proof of the identity used in Section~\ref{sec:goupillaud}}
\label{app:opuc}

The identity needed in the proof of Theorem~\ref{thm:goupillaud} is classical: it is
the Bernstein--Szeg\H{o} normalization together with the orthogonality of the
$\Phi_j$ with respect to the associated measure, in the theory of orthogonal
polynomials on the unit circle \cite{Szego1975,Simon2005}. Since only one special
case is used here, and since it costs half a page, I give a direct proof in the
notation of Section~\ref{sec:goupillaud}, so that the paper is self-contained.

Throughout, $\Disk=\{u:|u|<1\}$, and for a polynomial $\Phi$ of degree $m$ we write
$\Phi^{R}(u)=u^{m}\bigl(\Phi(1/u^{*})\bigr)^{*}$ for its reversal, so that
\begin{equation}
\Phi^{R}(u)=u^{m}\,\Phi(u)^{*},\qquad\text{hence}\qquad
|\Phi^{R}(u)|=|\Phi(u)|\qquad (|u|=1).
\label{eq:revcircle}
\end{equation}
The polynomials $\Phi_k$ are those of \eqref{eq:szego}: $\Phi_0=1$ and
\begin{equation}
\Phi_k=u\,\Phi_{k-1}-a_k\,\Phi_{k-1}^{R},
\qquad a_k=-\frac{q_k}{p_k}\in(-1,1),
\label{eq:appszego}
\end{equation}
each $\Phi_k$ monic of degree $k$. The coefficients $a_k$ are real, which
simplifies the formulas slightly.

\begin{lemma}[reversed recursion]\label{lem:apprev}
For $k\ge1$,
\begin{equation}
\Phi_k^{R}=\Phi_{k-1}^{R}-a_k\,u\,\Phi_{k-1},
\label{eq:apprev}
\end{equation}
and $\Phi_k^{R}(0)=1$ for all $k\ge0$.
\end{lemma}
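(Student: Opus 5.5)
The plan is to apply the reversal operation directly to the recursion \eqref{eq:appszego} and to use that the $a_k$ are real. First I would record the degrees involved. $\Phi_k$ is monic of degree $k$, so its reversal is $\Phi_k^{R}(u)=u^{k}\bigl(\Phi_k(1/u^{*})\bigr)^{*}$. Substituting $\Phi_k=u\Phi_{k-1}-a_k\Phi_{k-1}^{R}$ evaluated at $1/u^{*}$ and conjugating gives
\begin{equation*}
\bigl(\Phi_k(1/u^{*})\bigr)^{*}=u^{-1}\bigl(\Phi_{k-1}(1/u^{*})\bigr)^{*}-a_k\bigl(\Phi_{k-1}^{R}(1/u^{*})\bigr)^{*}.
\end{equation*}
This uses $(1/u^{*})^{*}=1/u$ and $a_k^{*}=a_k$. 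Multiplying by $u^{k}$ turns the first term into $u^{k-1}\bigl(\Phi_{k-1}(1/u^{*})\bigr)^{*}=\Phi_{k-1}^{R}(u)$.

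For the second term, I need the involution property $(\Phi^{R})^{R}=\Phi$ for a polynomial of exact degree $m$ whose reversal is taken with the same $m$. Here $\Phi_{k-1}^{R}$ is reversed at degree $k-1$, and $u^{k}=u\cdot u^{k-1}$, so
\begin{equation*}
u^{k}\bigl(\Phi_{k-1}^{R}(1/u^{*})\bigr)^{*}=u\,(\Phi_{k-1}^{R})^{R}(u)=u\,\Phi_{k-1}(u).
\end{equation*}
Together these give \eqref{eq:apprev}. The only point needing care is that $\Phi_{k-1}^{R}$ may have degree less than $k-1$. I would therefore state the reversal at degree $k-1$ explicitly, as a formal operation, and not by the actual degree. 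With that convention the involution is immediate from the definition, since $u^{m}\bigl((1/u^{*})^{m}\bigr)^{*}=1$. I would also do the computation as an identity of rational functions in $u$, so that no restriction to $|u|=1$ is needed. Alternatively, one can verify it on $|u|=1$ using \eqref{eq:revcircle}, where $1/u^{*}=u$, and then extend by the identity theorem for polynomials.

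For $\Phi_k^{R}(0)=1$, the quickest route is that $\Phi_k^{R}(0)$ is the complex conjugate of the leading coefficient of $\Phi_k$, which equals $1$ because $\Phi_k$ is monic. As a check, induction also works: $\Phi_0^{R}=1$, and evaluating \eqref{eq:apprev} at $u=0$ gives $\Phi_k^{R}(0)=\Phi_{k-1}^{R}(0)$.

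I expect the main obstacle to be only this bookkeeping of the reversal degree. It is the one place where a careless argument could go wrong.
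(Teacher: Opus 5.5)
Your proposal is correct and follows the paper's proof. Like the paper, you apply the degree-$k$ reversal $u^{k}\bigl(\Phi(1/u^{*})\bigr)^{*}$ to the Szeg\H{o} recursion and use that $a_k$ is real, identify the two resulting terms as $\Phi_{k-1}^{R}$ and $u\,\Phi_{k-1}$, and read off $\Phi_k^{R}(0)=1$ as the conjugate of the leading coefficient of the monic $\Phi_k$. Your remark that $\Phi_{k-1}^{R}$ must be reversed at the formal degree $k-1$, not its actual degree, is a sound clarification of what the paper's direct substitution does implicitly.
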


\begin{proof}
Apply $\Phi\mapsto u^{k}\bigl(\Phi(1/u^{*})\bigr)^{*}$ to \eqref{eq:appszego}. Since
$a_k$ is real,
\begin{equation}
u^{k}\bigl(\Phi_k(1/u^{*})\bigr)^{*}
=u^{k-1}\bigl(\Phi_{k-1}(1/u^{*})\bigr)^{*}
-a_k\,u^{k}\bigl(\Phi_{k-1}^{R}(1/u^{*})\bigr)^{*} .
\end{equation}
The first term on the right is $\Phi_{k-1}^{R}$ by definition. For the second,
$\Phi_{k-1}^{R}(1/u^{*})=(1/u^{*})^{k-1}\Phi_{k-1}(u)^{*}$, so
$u^{k}\bigl(\Phi_{k-1}^{R}(1/u^{*})\bigr)^{*}=u\,\Phi_{k-1}$, giving
\eqref{eq:apprev}. Finally $\Phi_k^{R}(0)$ is the complex conjugate of the
leading coefficient of $\Phi_k$, which is $1$ because $\Phi_k$ is monic.
\end{proof}

\begin{lemma}[the reversed polynomials are zero-free]\label{lem:appzero}
For every $k\ge0$, $\Phi_k^{R}(u)\ne0$ for all $u\in\overline{\Disk}$.
\end{lemma}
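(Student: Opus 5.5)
Induction on $k$ using the reversed recursion of Lemma~\ref{lem:apprev}, $\Phi_k^{R}=\Phi_{k-1}^{R}-a_k\,u\,\Phi_{k-1}$, together with $|a_k|<1$. For $k=0$, $\Phi_0^{R}=1$ has no zeros.

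For the inductive step, assume $\Phi_{k-1}^{R}$ has no zeros in $\overline{\Disk}$. Then the quotient
\[
b(u):=\frac{u\,\Phi_{k-1}(u)}{\Phi_{k-1}^{R}(u)}
\]
is holomorphic on a neighbourhood of $\overline{\Disk}$. On $|u|=1$, \eqref{eq:revcircle} gives $|\Phi_{k-1}^{R}|=|\Phi_{k-1}|$, so $|b|=1$ there. By the maximum principle, $|b|\le1$ on $\overline{\Disk}$. Hence
\[
\Phi_k^{R}=\Phi_{k-1}^{R}\,(1-a_k b),\qquad |a_k b|\le|a_k|<1\ \text{on}\ \overline{\Disk},
\]
so $1-a_kb$ does not vanish on $\overline{\Disk}$. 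Since $\Phi_{k-1}^{R}$ does not vanish either, neither does $\Phi_k^{R}$.

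The only point needing care is the holomorphy of $b$ on $\overline{\Disk}$, which is exactly the induction hypothesis; no separate argument is needed. The argument never uses that the $a_k$ are real, only that $|a_k|<1$. The latter follows from \eqref{eq:verb}, since $1-a_k^2=R_k/p_k^2>0$.

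As an alternative, one could use Rouché's theorem. On $|u|=1$ we have $|a_k u\Phi_{k-1}|=|a_k|\,|\Phi_{k-1}^{R}|<|\Phi_{k-1}^{R}|$, so $\Phi_k^{R}$ and $\Phi_{k-1}^{R}$ have the same number of zeros in $\Disk$, namely none, and neither has zeros on the circle. I would present the maximum-principle version, since it is shorter and handles the boundary directly.
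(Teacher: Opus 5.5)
Your proof is correct and matches the paper's own argument: the same quotient $b=u\Phi_{k-1}/\Phi_{k-1}^{R}$ (the paper calls it $f$), bounded by $1$ on $\overline{\Disk}$ via the maximum principle, combined with the factorization $\Phi_k^{R}=\Phi_{k-1}^{R}(1-a_kb)$ from Lemma~\ref{lem:apprev} and $|a_k|<1$. Your remark that realness of $a_k$ is never used is true of this step, but Lemma~\ref{lem:apprev} as stated relies on it; for complex $a_k$ the reversed recursion would carry $a_k^{*}$ instead, which changes nothing in your argument.
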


\begin{proof}
Induction on $k$; the case $k=0$ is trivial since $\Phi_0^{R}=1$. Suppose
$\Phi_{k-1}^{R}$ has no zero in $\overline{\Disk}$. Then
\begin{equation}
f(u):=\frac{u\,\Phi_{k-1}(u)}{\Phi_{k-1}^{R}(u)}
\end{equation}
is analytic on $\overline{\Disk}$, and $|f|=1$ on $|u|=1$ by
\eqref{eq:revcircle}. By the maximum principle $|f|\le1$ on $\overline{\Disk}$.
Lemma~\ref{lem:apprev} gives $\Phi_k^{R}=\Phi_{k-1}^{R}\,(1-a_kf)$, and
$|a_kf|\le|a_k|<1$, so the second factor does not vanish; neither does the first,
by the inductive hypothesis.
\end{proof}

\begin{lemma}\label{lem:appmain}
For $0\le j\le n$,
\begin{equation}
\frac{1}{2\pi}\int_0^{2\pi}\frac{|\Phi_j(u)|^2}{|\Phi_n(u)|^2}\,d\arg u
=\prod_{i=j+1}^{n}\frac{1}{1-a_i^{2}},
\label{eq:appmain}
\end{equation}
with the convention that an empty product equals $1$.
\end{lemma}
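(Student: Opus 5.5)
The plan is to prove \eqref{eq:appmain} by downward induction on $j$, or equivalently by peeling one factor $1/(1-a_n^2)$ at a time from $n$ downward. This mirrors Lemma~\ref{lem:peel}, with analyticity on $\Disk$ playing the role of integration over an independent phase.

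The key identity I would aim for is
\begin{equation}
\frac{1}{2\pi}\int_0^{2\pi}\frac{|\Phi_j|^2}{|\Phi_m|^2}\,d\arg u
=\frac{1}{1-a_m^2}\cdot\frac{1}{2\pi}\int_0^{2\pi}\frac{|\Phi_j|^2}{|\Phi_{m-1}|^2}\,d\arg u
\qquad (j\le m-1).
\end{equation}
Iterating it from $m=n$ down to $m=j+1$ leaves $\frac{1}{2\pi}\int |\Phi_j|^2/|\Phi_j|^2=1$, which gives the claim. The case $j=n$ is trivial.

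To prove the identity, I work on $|u|=1$. By \eqref{eq:revcircle} I may replace $|\Phi_m|^2$ by $|\Phi_m^R|^2$ and $|\Phi_j|^2$ by $|\Phi_j^R|^2$. Lemma~\ref{lem:apprev} gives $\Phi_m^R=\Phi_{m-1}^R(1-a_mf)$ with $f=u\Phi_{m-1}/\Phi_{m-1}^R$. By Lemma~\ref{lem:appzero} and the maximum principle, $f$ is analytic on $\overline{\Disk}$ with $|f|=1$ on the circle. Moreover $f(0)=0$, since $\Phi_{m-1}^R(0)=1$. Hence
\begin{equation}
\frac{1}{|1-a_mf|^2}=\frac{1}{1-a_m^2}\,\Rp\frac{1+a_mf}{1-a_mf},
\end{equation}
which is the elementary identity $(1-a^2)/|1-af|^2=\Rp\frac{1+af}{1-af}$, valid for $|f|=1$ and real $|a|<1$. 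The integrand then becomes
\begin{equation}
\frac{|\Phi_j|^2}{|\Phi_{m-1}|^2}\cdot\frac{1}{1-a_m^2}\,\Rp\frac{1+a_mf}{1-a_mf}
=\frac{|\Phi_j|^2}{|\Phi_{m-1}|^2}\cdot\frac{1}{1-a_m^2}\Bigl(1+\Rp\sum_{s\ge1}2a_m^sf^s\Bigr).
\end{equation}
This is the analogue of the Poisson expansion \eqref{eq:poisson}. The remaining task is to show that each correction term integrates to zero against $g:=|\Phi_j|^2/|\Phi_{m-1}|^2$.

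This vanishing is the main obstacle, since $g$ is not analytic. My approach is to rewrite $g$ on the circle as
\begin{equation}
g=\frac{\Phi_j\,\Phi_j^{*}}{\Phi_{m-1}\,\Phi_{m-1}^{*}}
=\frac{\Phi_j^R}{\Phi_{m-1}^R}\cdot\frac{\overline{u^{\,j}\Phi_j^R}\,u^{m-1}\cdots}{\cdots},
\end{equation}
that is, to use $\Phi_j^*=\bar u^{\,j}\Phi_j^R$ and $\Phi_{m-1}^*=\bar u^{\,m-1}\Phi_{m-1}^R$ on $|u|=1$. This yields
\begin{equation}
g\,f^s=u^{\,s}\,u^{\,m-1-j}\,\frac{\Phi_j\,\overline{\Phi_j^R}}{\Phi_{m-1}^R\,\overline{\Phi_{m-1}^R}}\Bigl(\frac{\Phi_{m-1}}{\Phi_{m-1}^R}\Bigr)^{s}\cdots.
\end{equation}
Rather than chasing this directly, the cleaner route I would try first is to write
\begin{equation}
g\,f^s=\frac{\Phi_j}{\Phi_{m-1}^R}\cdot\overline{\Bigl(\frac{\Phi_j}{\Phi_{m-1}}\Bigr)}\cdot f^s
=u^{\,s}\,\frac{\Phi_j\,\Phi_{m-1}^{\,s-1}\,u^{\,s-1}}{(\Phi_{m-1}^R)^{s+1}}\cdot u^{\,m-1-j}\,\overline{\Phi_j^R}\,\overline{\Phi_{m-1}^R}^{\,-1}\cdots,
\end{equation}
and then to check that the combination is $u^{\,m-j+s-1}$ times a function analytic in $\Disk$. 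With $s\ge1$ and $j\le m-1$ the power is at least $1$. If that holds, the mean value over the circle is zero and the term $\overline{gf^s}$ vanishes as well, since $g$ is real.

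If this bookkeeping with conjugates becomes messy, the fallback is the standard OPUC argument. The measure $d\mu=d\arg u/(2\pi|\Phi_{m}|^2)$ is, up to normalization, the Bernstein--Szeg\H{o} measure for which $\Phi_0,\dots,\Phi_m$ are orthogonal. I would verify this via $\int u^{-r}\Phi_m\,d\mu=\int u^{-r}/\Phi_m^{*}=\int u^{m-r}/\Phi_m^R=0$ for $0\le r<m$, by analyticity of $1/\Phi_m^R$ from Lemma~\ref{lem:appzero}. The norms then satisfy $\|\Phi_k\|^2=(1-a_{k+1}^2)\|\Phi_{k+1}\|^2$, which is obtained by pairing the recursion \eqref{eq:appszego} with $\Phi_{k+1}$ and using $\langle\Phi_{k+1},\Phi_k^R\rangle=\langle\Phi_{k+1},1\rangle\cdot$const. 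Together with $\|\Phi_m\|^2=1$, this gives \eqref{eq:appmain} directly for $j\le m=n$.
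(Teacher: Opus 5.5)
Your main line is the paper's proof. Your $f=u\Phi_{m-1}/\Phi_{m-1}^R$ is the paper's $s$. The Herglotz form $\frac{1}{1-a_m^2}\Rp\frac{1+a_mf}{1-a_mf}$ is the Poisson expansion \eqref{eq:apppoisson} written one-sidedly. The cross terms are to be killed by the mean value property.

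However, the step you yourself call the main obstacle is only asserted conditionally ("if that holds"), and the displays you offer for it are incorrect:
\begin{itemize}
\item $\frac{\Phi_j}{\Phi_{m-1}^R}\,\overline{(\Phi_j/\Phi_{m-1})}$ equals $g\,\Phi_{m-1}/\Phi_{m-1}^R$, not $g$.
\item The last display carries both $u^{s}$ and an extra $u^{s-1}$.
\item Conjugated factors such as $\overline{\Phi_j^R}$ and $\overline{\Phi_{m-1}^R}$ are left standing, so analyticity cannot be read off.
\end{itemize}
The claim is nevertheless true. The bookkeeping takes one line once \emph{every} conjugate is removed using $\overline{\Phi_l}=\bar u^{\,l}\Phi_l^R$ on $|u|=1$, which is \eqref{eq:revcircle}:
\[
g=u^{\,m-1-j}\,\frac{\Phi_j\Phi_j^R}{\Phi_{m-1}\Phi_{m-1}^R},\qquad
g\,f^s=u^{\,m-1-j+s}\,\frac{\Phi_j\,\Phi_j^R\,\Phi_{m-1}^{\,s-1}}{(\Phi_{m-1}^R)^{\,s+1}} .
\]
Here $s\ge1$ is used twice. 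It supplies the positive power of $u$. It also lets one factor of $\Phi_{m-1}$ from $f^s$ cancel the $\Phi_{m-1}$ in the denominator of $g$, which matters because the zeros of $\Phi_{m-1}$ lie in $\Disk$. The remaining denominator is zero-free on $\overline{\Disk}$ by Lemma~\ref{lem:appzero}, so $gf^s$ is analytic on $\overline{\Disk}$ and vanishes at $0$. This is exactly the paper's $G_m$, and with it your argument is complete.

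Your fallback would not work as written, for two reasons.
\begin{itemize}
\item The norm recursion is inverted. The correct relation is $\|\Phi_{k+1}\|^2=(1-a_{k+1}^2)\|\Phi_k\|^2$. Your version, combined with $\|\Phi_m\|^2=1$, would give $\prod(1-a_i^2)$, the reciprocal of the target.
\item More seriously, $\int u^{-r}\Phi_m\,d\mu=0$ shows only that $\Phi_m$ is orthogonal to lower degrees. The recursion at a level $k+1<m$ needs $\Phi_{k+1}\perp\Phi_k^R$ in $L^2(\mu)$. Your trick fails there, because $1/|\Phi_m|^2=u^m/(\Phi_m\Phi_m^R)$ puts $\Phi_m$, whose zeros lie in $\Disk$, into the denominator, and it no longer cancels.
\end{itemize}
Closing this requires either the inverse Szeg\H{o} recursion, to identify $\Phi_0,\dots,\Phi_{m-1}$ as the orthogonal polynomials of $\mu$, or precisely the cross-term computation above.
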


\begin{proof}
Fix $j$ and induct downwards on $n\ge j$. For $n=j$ both sides equal $1$. Let
$n>j$ and assume \eqref{eq:appmain} holds with $n-1$ in place of $n$.

On $|u|=1$ put $s:=u\Phi_{n-1}/\Phi_{n-1}^{R}$, so $|s|=1$ by
\eqref{eq:revcircle}, and \eqref{eq:appszego} reads
$\Phi_n=\Phi_{n-1}^{R}\,(s-a_n)$. Hence
\begin{equation}
|\Phi_n|^2=|\Phi_{n-1}|^2\,|s-a_n|^2,
\qquad\text{so}\qquad
\frac{|\Phi_j|^2}{|\Phi_n|^2}
=\frac{|\Phi_j|^2}{|\Phi_{n-1}|^2}\cdot\frac{1}{|s-a_n|^2}.
\end{equation}
Since $|a_n|<1$ and $|s|=1$, the Poisson kernel expansion---the same one used in
\eqref{eq:poisson}---gives
\begin{equation}
\frac{1}{|s-a_n|^2}=\frac{1}{1-a_n^{2}}\sum_{m\in\Zed}a_n^{|m|}\,s^{\,m}.
\label{eq:apppoisson}
\end{equation}
The term $m=0$ contributes $\frac{1}{1-a_n^{2}}$ times the integral appearing in
the inductive hypothesis, so it suffices to show that
\begin{equation}
\frac{1}{2\pi}\int_0^{2\pi}\frac{|\Phi_j|^2}{|\Phi_{n-1}|^2}\,s^{\,m}\,d\arg u=0
\qquad(m\ne0).
\label{eq:appcross}
\end{equation}

Let $m\ge1$. Using $\Phi_l^{*}=u^{-l}\Phi_l^{R}$ on $|u|=1$, which is
\eqref{eq:revcircle} again,
\begin{equation}
\frac{|\Phi_j|^2}{|\Phi_{n-1}|^2}\,s^{\,m}
=u^{\,n-1-j}\,\frac{\Phi_j\Phi_j^{R}}{\Phi_{n-1}\Phi_{n-1}^{R}}
\cdot\frac{u^{m}\Phi_{n-1}^{m}}{(\Phi_{n-1}^{R})^{m}}
=u^{\,n-1-j+m}\,
\frac{\Phi_j\,\Phi_j^{R}\,\Phi_{n-1}^{\,m-1}}{(\Phi_{n-1}^{R})^{\,m+1}}
=:G_m(u).
\end{equation}
The numerator of $G_m$ is a polynomial, and by Lemma~\ref{lem:appzero} the
denominator has no zero in $\overline{\Disk}$; hence $G_m$ is analytic on
$\overline{\Disk}$. Moreover $n-1-j+m\ge1$ because $j\le n-1$ and $m\ge1$, so
$G_m(0)=0$. The mean value property therefore gives
\begin{equation}
\frac{1}{2\pi}\int_0^{2\pi}G_m(u)\,d\arg u=G_m(0)=0 .
\end{equation}
For $m\le-1$, note that $|\Phi_j|^2/|\Phi_{n-1}|^2$ is real and
$(s^{\,m})^{*}=s^{-m}$, so the integral in \eqref{eq:appcross} is the complex
conjugate of the corresponding integral with $-m\ge1$, hence also zero. This
proves \eqref{eq:appcross} and completes the induction.
\end{proof}

Taking $j=0$ in \eqref{eq:appmain} shows that
$\|\Phi_n\|^{2}\,d\arg u/(2\pi|\Phi_n|^{2})$ with
$\|\Phi_n\|^{2}:=\prod_{i=1}^{n}(1-a_i^{2})$ is a probability measure, and the
general case expresses the orthogonality of the $\Phi_j$ with respect to it. The
proof above uses only the maximum principle, the mean value property and the
Poisson kernel.

\section*{Acknowledgements}

The Normalized Energy Density was introduced jointly with Sawada and
Hirai in \cite{GSH2011}, and this note builds directly on that work. 
The proofs presented here were developed in dialogue with Claude, an AI assistant
made by Anthropic. In particular, the lifting to the torus with Weyl's theorem
(Section~\ref{sec:torus}), the Bernstein--Szeg\H{o} route for the
equal-travel-time case (Section~\ref{sec:goupillaud}), and the numerical
verification of Section~\ref{sec:numerics} originated in that exchange; the
reduction by reflectionless subdivision (Section~\ref{sec:reduction}) originated
with the author. In accordance with arXiv policy, the tool is not listed as an
author. The author has verified all arguments and takes full responsibility for
the content, including any errors.



\begin{thebibliography}{99}

\bibitem{GSH2011}
H.~Goto, S.~Sawada and T.~Hirai,
\emph{Conserved quantity of elastic waves in multi-layered media: 2D SH case ---
Normalized Energy Density},
Wave Motion \textbf{48} (2011), 603--613.

\bibitem{Goto2013}
H.~Goto, Y.~Kawamura, S.~Sawada and T.~Akazawa,
\emph{Direct estimation of near-surface damping based on normalized energy
density},
Geophys.\ J.\ Int.\ \textbf{194} (2013), 488--498.

\bibitem{KokushoMotoyama}
T.~Kokusho and R.~Motoyama,
\emph{Energy dissipation in surface layer due to vertically propagating SH wave},
J.\ Geotech.\ Geoenviron.\ Eng.\ \textbf{128} (2002), 309--318.

\bibitem{Goupillaud}
P.~L. Goupillaud,
\emph{An approach to inverse filtering of near-surface layer effects from seismic
records},
Geophysics \textbf{26} (1961), 754--760.

\bibitem{Weyl1916}
H.~Weyl,
\emph{\"Uber die Gleichverteilung von Zahlen mod.\ Eins},
Math.\ Ann.\ \textbf{77} (1916), 313--352.

\bibitem{Szego1975}
G.~Szeg\H{o},
\emph{Orthogonal Polynomials},
4th ed., AMS Colloquium Publications XXIII, Providence, RI, 1975.

\bibitem{Simon2005}
B.~Simon,
\emph{Orthogonal Polynomials on the Unit Circle, Part 1: Classical Theory},
AMS Colloquium Publications 54, Providence, RI, 2005.

\bibitem{BrucksteinKailath}
A.~M. Bruckstein and T.~Kailath,
\emph{Inverse scattering for discrete transmission-line models},
SIAM Review \textbf{29} (1987), 359--389.

\bibitem{UrsinBerteussen}
B.~Ursin and K.-A. Berteussen,
\emph{Comparison of some inverse methods for wave propagation in layered media},
Proc.\ IEEE \textbf{74} (1986), 389--400.

\bibitem{Gibson2013}
P.~C. Gibson,
\emph{Hardy space on the polydisk and scattering in layered media},
preprint, arXiv:1306.2871 (2013).

\bibitem{FGPS}
J.-P. Fouque, J.~Garnier, G.~Papanicolaou and K.~S{\o}lna,
\emph{Wave Propagation and Time Reversal in Randomly Layered Media},
Springer, New York, 2007.

\bibitem{AkiRichards}
K.~Aki and P.~G. Richards,
\emph{Quantitative Seismology},
2nd ed., University Science Books, Sausalito, CA, 2002.

\end{thebibliography}
\end{document}